\title{Polynomial-time homology for simplicial Eilenberg--MacLane spaces
}
\newif\ifcmts
\theoremstyle{plain}
\newtheorem{theorem}{Theorem}[section]
\newtheorem{lemma}[theorem]{Lemma}
\newtheorem{corol}[theorem]{Corollary}
\theoremstyle{definition}
\DeclareMathOperator{\im}{im} 
\renewcommand\:{\colon}
\newcommand{\alterdef}[1]{\!\left\{\!\!\begin{array}{ll}
                                   #1 \end{array}  \right. }
\newcommand\compos{}  
\newcommand{\Z}{\mathbbm{Z}}
\newcommand{\N}{\mathbbm{N}}
\newcommand{\R}{\mathbbm{R}}
\renewcommand{\SS}{\mathcal{S}}
\newcommand{\CC}{\mathcal{C}}
\newcommand{\TT}{\mathcal{T}}
\newcommand{\II}{\mathcal{I}}
\newcommand\ttau{{\tilde\tau}}
\newcommand\tsigma{{\tilde\sigma}}
\newcommand\thedim{k}
\newcommand\diff{d}
\newcommand\dtC{\raisebox{0.2ex}{$\stackrel{\raisebox{-0.3ex}[0ex][-0.8ex]{$\scriptscriptstyle\approx$}}C$}}
\DeclareMathOperator{\id}{id}   
\DeclareMathOperator{\size}{{\sf size}}
\DeclareMathOperator{\reach}{{\sf reach}}
\DeclareMathOperator{\treach}{{\sf t-reach}}
\DeclareMathOperator{\lpow}{{\sf lpow}}
\DeclareMathOperator{\ltrim}{{\sf ltrim}}
\DeclareMathOperator{\ltrims}{\mbox{${\sf ltrim{*}}$}}
\DeclareMathOperator{\ones}{{\sf ones}}
\DeclareMathOperator{\Bas}{Bas}
\newcommand\ndg{{\rm ndg}}
\newcommand\Vbc{V_{\rm bch}}
\newcommand\Vbs{V_{\rm bs}}
\DeclareMathOperator{\LAS}{{\sf LAS}}
\newcommand\EC{EC}
\newcommand\xample[2]{Example: $\tau=[#1]$, $\tau'=[#2]$.}
\def\caseA{A}
\def\caseB{B}
\def\caseC{C}
\def\caseD{D}
\def\caseE{E}
\def\caseEa{F}
\def\caseF{G}
\def\caseG{H}
\def\caseH{I}
\def\casedA{dA}
\def\casedB{dB}
\def\casedC{dC}
\def\casedD{dD}
\def\casedH{dI}
\long\def\onefigure#1#2{
\begin{figure*}[tbp]
\begin{center}
#1
\end{center}
\caption{#2}
\end{figure*}
}
\def\immediateFigure#1{%
\smallskip\begin{center}#1\end{center}\smallskip }
\newcommand{\labfig}[2]  
{\onefigure{\mbox{\includegraphics{#1}}}{\label{f:#1} #2} }
\newcommand{\labfigw}[3]  
{\onefigure{\mbox{\includegraphics[width=#2]{#1}}}{\label{f:#1} #3}}
\newcommand{\immfig}[1]  
{\immediateFigure{\mbox{\includegraphics{#1}}}}
\newcommand{\immfigw}[2] 
{\immediateFigure{\mbox{\includegraphics[width=#2]{#1}}}}
\def\indef#1{\emph{#1}}
\newcommand{\heading}[1]{\vspace{1ex}\par\noindent{\bf\boldmath #1}}
\newcommand{\marrow}{\marginpar{\boldmath$\longleftarrow$}}
\newcommand{\jirka}[1]{\ifhmode\newline\fi\marrow \textsf{*** (JIRKA: ) #1\newline}}
\newcommand{\marek}[1]{\ifhmode\newline\fi\marrow \textsf{*** (MAREK: ) #1\newline}}
\newcommand{\marrow}{}
\newcommand{\jirka}[1]{}
\newcommand{\marek}[1]{}
\def\kamsymb{{\rm a}}
\def\ethsymb{{\rm b}}
\def\grasymb{{\rm *}}
\def\grensymb{{\rm c}}
\author{
Marek Kr\v{c}\'al$^{\kamsymb}$ \and
Ji\v{r}\'{\i} Matou\v{s}ek$^{\kamsymb, \ethsymb}$
\and Francis Sergeraert$^\grensymb$
}
\begin{document}

\maketitle

{\renewcommand\thefootnote{\grasymb}
\footnotetext{
The research by J.\,M. and M.\,K.\ was supported
by the Institute for Theoretical Computer Science (ITI),
Charles University, Prague (project 1M0545
of the Czech Ministry of Education) and by the
ERC Advanced Grant No.~267165.
The research by M.\,K. was also supported by the project GAUK~49209.
 }
}


{\renewcommand\thefootnote{\kamsymb}
\footnotetext{Department of Applied Mathematics,
Charles University, Malostransk\'{e} n\'{a}m.~25,
118~00~~Praha~1,  Czech Republic}
}
{\renewcommand\thefootnote{\ethsymb}
\footnotetext{Institute of  Theoretical Computer Science,
ETH Zurich, 8092~Zurich, Switzerland}
}
{\renewcommand\thefootnote{\grensymb}
\footnotetext{Institut Fourier,
BP~74,
38402~St Martin, d'H\`eres Cedex,
France}}


\begin{abstract}
In an earlier paper of \v{C}adek, Vok\v{r}\'{\i}nek, Wagner,
and the present authors, we investigated an algorithmic problem
in computational algebraic topology, namely, the
computation of all possible homotopy classes of maps between
two topological spaces, under suitable restriction on the spaces.

We aim at showing that, if the dimensions of the considered spaces
are bounded by a constant, then the computations can be done in
polynomial time. In this paper we make a significant technical
step towards this goal: we show that the Eilenberg--MacLane
space $K(\Z,1)$, represented as a simplicial group, can be
equipped with \emph{polynomial-time homology} (this is a polynomial-time
version of \emph{effective homology} considered in previous
works of the third author and co-workers).

To this end, we construct a suitable \emph{discrete vector field},
in the sense of Forman's \emph{discrete Morse theory}, on $K(\Z,1)$.
The construction is purely combinatorial and it can be understood
as a certain procedure for reducing finite sequences of integers,
without any reference to topology.

The Eilenberg--MacLane spaces
are the basic building blocks in a \emph{Postnikov system},
which is a ``layered'' representation of a topological space
suitable for homotopy-theoretic computations.
Employing the result of this paper
together with other results on polynomial-time homology,
in another paper we obtain, for every fixed $\thedim $,  a polynomial-time
algorithm for computing the $\thedim $th homotopy group $\pi_\thedim(X)$
of a given simply connected space $X$, as well as the first $\thedim $
stages of a Postnikov system for~$X$, and also
a polynomial-time version  of the algorithm
of \v{C}adek et~al.\ mentioned above.
\end{abstract}

\section{Introduction}

Recently our co-authors and we \cite{CKMSVW11} have developed an algorithm
for a problem in computational algebraic topology (more precisely,
in computational homotopy theory),
 namely, computing all homotopy classes of maps between two topological spaces
$X$ and $Y$ (given as finite simplicial complexes, say),
under certain natural conditions on $X$ and $Y$.

Our original motivation was understanding the computational
complexity of the $\Z_2$-index of a given
$\Z_2$-space, which is a quantity appearing in various applications
of topology in combinatorics and geometry (e.g., topological lower bounds
for the chromatic number of a graph, or an algorithm for testing
the embeddability of a given simplicial complex into $\R^d$).
We hope to reach results in this direction in the future,
and we also expect that the developed methods will be applicable
for other natural problems (such as extendability of maps;
as a concrete application, it was already possible to answer a question
of Franek et al.~\cite{Franek-al} on testing nullhomotopy of maps
into a sphere). For more information on this project we refer to 
\cite{CKMSVW11,polypost,ext-hard}.

\heading{Towards polynomial-time homology. }
The implementation of some of the operations in the algorithm
of \cite{CKMSVW11} relies on the methods of \emph{effective
homology}, initiated by the third author
in \cite{Sergeraert:ComputabilityProblemAlgebraicTopology-1994}
and further developed by him and his co-workers
(see, e.g., \cite{RubioSergeraert:ConstructiveAlgebraicTopology-2002,RomeroRubioSergeraert,SergerGenova}). These provide algorithmic solutions
of many problems in algebraic topology, but so far no analysis
of their running time was available, and for some parts the running
time can actually be exponential.

One of our aims is to obtain \emph{polynomial-time} algorithms for
these tasks where possible, or alternatively, show computational
hardness.

Let us stress that by ``polynomial-time'' we mean, throughout this paper,
\emph{polynomial-time for every fixed dimension}.
Thus, assuming that the input to an algorithm
is a space  represented as a finite simplicial complex $X$,
we want that the running time is polynomial in the number
of simplices of $X$, but the polynomial may depend on the dimension $\thedim$ of~$X$
(and the dependence on $\thedim $ may be exponential or even worse).
Of course, one could be even more ambitious and ask for a polynomial
dependence on $\thedim $ as well; however, we do not expect such algorithms
to exist, in view of computational hardness results
\cite{Anick-homotopyhard,ext-hard}.

To integrate this effort with existing algorithms, we start
with the framework of effective homology mentioned above, and
we introduce an analogous definition of \emph{polynomial-time
homology}; see Section~\ref{s:eh-intro}. In another paper
\cite{polypost}, we show
that  various known constructions and operations on objects
with effective homology have polynomial-time versions.
With a repertoire of such operations, we also obtain a
polynomial-time version of the algorithm of
\cite{CKMSVW11}, as well as other algorithms, such as
computing the higher homotopy group $\pi_\thedim(X)$ in
polynomial time for every fixed $\thedim $, or computing the
first $\thedim $ stages of a Postnikov system for~$X$.

\heading{This paper. } Here we make a significant step in this
development.
First we set up the framework of polynomial-time homology
(modeled after effective homology mentioned above)
and some tools of general applicability.
Then, in the second part of the paper, we present
our main technical result. The problem which we solve
can be formulated purely combinatorially, although in this language it perhaps
doesn't sound extremely natural:
it is a question about
reducing finite sequences of integers by certain simple operations.
We will state it below, and no topological notion at all is required
for understanding this problem and our solution.

However, to explain its role in computational topology,
we first need to sketch some background information.
A standard reference for this material is
May \cite{May:SimplicialObjects-1992}; a concise overview is given
in \cite{CKMSVW11}, and more leisurely explanations can be found in
\cite{SergerTrieste} or~\cite{SergerGenova}.

A common technique in mathematics and in computer science is to
decompose a general, presumably complicated object into
simpler building blocks. For the purposes of understanding
continuous mappings going \emph{into} a given topological
space $Y$, a suitable decomposition is a
\emph{Postnikov system} for~$Y$; indeed, this is a crucial
ingredient of the algorithm in~\cite{CKMSVW11}.

We do not need to define the rather complicated notion
of Postnikov system here; it suffices to say
that its ``building blocks'' belong to a particular class
of topological spaces, called \emph{Eilenberg--MacLane spaces}
and denoted by $K(G,\thedim)$, where $G$ is an Abelian group
 and $\thedim \ge 1$ is an integer.
In the Eilenberg--MacLane spaces appearing in a Postnikov system
for $Y$, the role of the group $G$ is played by the
homotopy groups $\pi_i(Y)$, $i \ge 2$.

In topology, $K(G,\thedim)$ is defined as
a topological space $T$ whose homotopy groups
satisfy $\pi_\thedim(T)\cong G$ and $\pi_i(T)=0$ for all $i \ne \thedim$.
It is determined uniquely up to homotopy equivalence
(in the class of all CW complexes).

Generally speaking, the spaces $K(G,\thedim)$  are
infinite-dimensional and they do not look like very simple
objects (with the exception of $K(\Z,1)$, which is homotopy
equivalent to the circle $S^1$). However, they are in some
sense the simplest possible spaces concerning maps going into
them. These spaces are of basic importance in algebraic
topology, and a lot of work has been devoted to studying
their properties, and in particular, computing their homology
and cohomology (Serre \cite{serre53} and H.~Cartan
\cite{cartan} are two of the most famous classical works;
see, e.g.,
 Cl\'ement \cite{clement} for an overview and some computational
aspects). We also refer to Romero and Rubio \cite{RomeroRubioKG1}
for an algorithmic study of $K(G,1)$ for noncommutative groups~$G$.

For the intended algorithmic use, we need a particular representation
of $K(G,\thedim)$; namely, we need it represented as a particular kind
of a simplicial set (simplicial sets will be briefly
introduced in Section~\ref{s:eh-intro} below),
a so-called \emph{Kan simplicial set}. We
use the standard Eilenberg--MacLane simplicial model for \(K(G,\thedim)\);
see
\cite[Chapter III]{EilenbergMacLane:GroupsHPin1-1953},
\cite[Chapter V]{May:SimplicialObjects-1992}.

For the algorithms, we need to equip the simplicial Eilenberg--MacLane
spaces with polynomial-time homology. The $K(G,\thedim)$ we may encounter
can have any finitely generated Abelian group as $G$, and any
positive integer as~$\thedim$.

However, in this paper we will deal only with $K(\Z,1)$,
which serves as a base case, while the other $K(G,\thedim)$
can be obtained from it using several operations. First, for
direct products of groups, we have $K(G\times H,\thedim)\cong
K(G,\thedim)\times K(H,\thedim)$, and so, with a general
product operation available, we may assume that $G$ is
cyclic. Second, a general construction, known as the
\emph{classifying space} (actually, in the simplicial
setting, we deal with the so-called $\overline W$-construction),
allows one to pass from
$K(G,\thedim)$ to $K(G,\thedim+1)$, so indeed $\thedim =1$ is
the important base case. Finally, polynomial-time homology
for $K(\Z/m\Z,1)$ can be obtained from that for $K(\Z,1)$
using another operation, namely, computing the \emph{base
space of a fibration}. These reductions are discussed in
\cite{SergerGenova}, and
polynomial-time versions are discussed in \cite{polypost};
here we just wanted to
provide a quick explanation of why the $K(\Z,1)$ case
deserves special attention.\footnote{Curiously, $K(\Z,1)$ as
a topological space almost can't be simpler---as we
mentioned, it is homotopy equivalent to the circle $S^1$, and other
Eilenberg--MacLane spaces are much more complicated. But we
need to work with the Kan simplicial model of $K(\Z,1)$ as introduced
above, which has infinitely many simplices in every dimension $\thedim \ge 1$.
As we will see, for effective (or polynomial-time) homology,
it is not sufficient to know, for example, that
$H_2(K(\Z,1))=0$, but we need to be able to actually compute
``witnesses'' for it; that is, given a 2-cycle $z_2$ on
$K(\Z,1)$, compute a 3-chain for which $z_2$ is its
boundary. This problem would be trivial for the standard simplicial
representation of $S^1$ with one vertex  and one edge,
but it is not trivial for the considered
Kan model of $K(\Z,1)$.}

\heading{The combinatorial problem about integer sequences. }
The $\thedim $-dimensional simplices of the standard simplicial
model of $K(\Z,1)$, $\thedim =0,1,
\ldots$  can be represented by $\thedim $-term sequences of integers.
With the traditional ``bar notation'', such a sequence is written as
\begin{equation}\label{e:theseq}
\sigma=[a_1\,|\,a_2\,|\cdots|\,a_\thedim],\ \ a_1,a_2,\ldots,a_\thedim\in\Z.
\end{equation}
In the rest of this introduction, a ``$\thedim $-dimensional simplex''
will thus be synonymous with a ``$\thedim $-term sequence of integers''.

For our problem we consider only \emph{nondegenerate
simplices}, represented by sequences with \emph{no zero
terms}. Thus, from now on, we always assume that all the
$a_i$ are nonzero.

For each $\thedim $, there are $\thedim +1$ \emph{face operators}
$\partial_0,\partial_1,\ldots,\partial_\thedim$, which
map $\thedim $-term sequences to $\thedim -1$ term sequences:
$\partial_0$ deletes the first component, $\partial_\thedim$ deletes
the last component, and for $i =1,2,\ldots,\thedim-1$, $\partial_i$
reduces the number of components by one by adding together the
$i $th and $(i+1)$st component. More formally, with $\sigma$ as above,
$$
\partial_0\sigma=[a_2\,|\cdots|\,a_\thedim],\ \
\partial_\thedim\sigma=[a_1\,|\cdots|\,a_{\thedim-1}],
$$
$$
\partial_i\sigma=[a_1\,|\cdots|\,a_{i-1}\,|\,
a_{i}+a_{i+1}\,|\,a_{i+2}\,|\cdots|\,a_\thedim],\ \ \ 1\le i\le \thedim-1.
$$

The goal is to divide the set of all possible finite
sequences $\sigma$ of nonzero integers into three classes
$\SS$, $\TT$, and $\CC$ (the \emph{source simplices},
\emph{target simplices}, and \emph{critical simplices}),
and construct a bijection $V\:\SS\to\TT$ (which will be called
a \emph{discrete vector field}), such that
for every $\sigma\in\SS$, we have $\sigma=\partial_i V(\sigma)$
for exactly one $i $. We also require certain additional properties, which
we explain next.

With $\SS,\TT,\CC$, and $V$ as above,
 let us consider a sequence (simplex) $\tilde\sigma\in\SS$ of
some dimension $\thedim $, and let us say that a simplex $\tau$
(of dimension $\thedim $ or $\thedim+1$)
is \emph{reachable} from $\tilde\sigma$ if it can be reached from $\tilde\sigma$
by finitely many moves, where the allowed moves are
\begin{itemize}
\item passing from a current simplex $\sigma\in\SS$ to the simplex
$\tau=V(\sigma)\in\TT$, and
\item passing from a current simplex $\tau\in\TT$ to a simplex
$\sigma=\partial_i\tau\in\SS\cup\CC$
such that $\tau\ne V(\sigma)$, where $i \in\{0,1,\ldots,\thedim\}$.
\end{itemize}

With these definitions, it is required that
\begin{enumerate}
\item[(i)] for every $\thedim $, $\CC$ contains only finitely many
$\thedim $-dimensional simplices; and
\item[(ii)] starting with any $\tilde\sigma$, we can never
make an infinite sequence of allowed moves; that is,
we can reach only finitely many simplices, and we also
cannot get into a cycle.
\end{enumerate}
Moreover, we measure the \emph{size} of a simplex
$\sigma=[a_1|\cdots|a_\thedim]$ as the total number
of bits needed to write down $a_1,\ldots,a_\thedim$; more formally,
we set $\size(\sigma):=\sum_{i=1}^\thedim\size(a_i)$
and $\size(a):=1+\lfloor\log_2(|a|+1)\rfloor$.
Then we also require that
\begin{enumerate}
\item[(iii)] For every $\thedim $-dimensional simplex $\tilde\sigma$,
the sum of $\size(\sigma)$ over all $\sigma$ reachable from
$\tilde\sigma$ is bounded by a polynomial (depending on $\thedim $)
in $\size(\tilde\sigma)$.
\end{enumerate}

To illustrate these definitions, let us present
a classical vector field $V_{\rm EML}$
 due to Eilenberg and Mac Lane, which satisfies
(i) and (ii) (and yields effective homology for $K(\Z,1)$) but not~(iii).

There are only two critical
simplices, the 0-dimensional  $[\,]$ (the empty
sequence)
and the 1-dimensional $[1]$.\footnote{This actually
 corresponds to the topological
fact that the considered $K(\Z,1)$, as a topological space,
 is homotopy equivalent to $S^1$; $[\,]$ represents a vertex,
and $[1]$ an edge glued to that vertex by both ends, forming an $S^1$.}
The set $\SS$ of source simplices consists of the sequences with $a_1\ne 1$,
while $\TT$ contains the sequences with $a_1=1$ (the
two critical simplices are exceptions to this rule).

For $\sigma=[a_1|\cdots|a_\thedim]\in\SS$, $a_1\ne 1$, the vector
field $V_{\rm EML}$ is defined by
$$
V_{\rm EML}(\sigma):=\alterdef{{[1|a_1-1|a_2|\cdots|a_\thedim]}&\mbox{for }a_1>1,\\
                     {[1|a_1|a_2|\cdots|a_\thedim]}&\mbox{for }a_1<0.}
$$
It can be checked that, for any starting $\tilde\sigma$, the
sequence of moves is determined uniquely (there is no branching).

It is easy to see that, for
a positive integer $a$, the sequence of moves starting from
$[a]$ is $[a]\to[1|a-1]\to[a-1]\to[1|a-2]\to[a-2]\to\ldots$;
there are about $a$ moves, and this is \emph{exponential}
in the number of bits of~$a$. Thus, condition (iii) above
indeed fails.

We will provide a solution satisfying (i)--(iii) in Section~\ref{s:KZ1}.
Before that, we introduce simplicial sets, polynomial-time homology,
and discrete vector fields in general.

\section{Simplicial sets with polynomial-time homology}\label{s:eh-intro}

\heading{Simplicial sets. } A simplicial complex is a way of specifying
a topological space in purely combinatorial terms,
and also a way of presenting a topological space as an input to an algorithm;
we assume that the reader is basically familiar with this concept.

A simplicial set can be regarded as a generalization of a simplicial
complex;  it is more complicated, but more powerful
and flexible. The algorithms we consider use simplicial sets as the
main data type for representing topological
spaces and their maps. A friendly introduction
to simplicial sets is \cite{Friedm08}, and another introductory
treatment can be found in \cite{SergerTrieste}; older compact
sources are, e.g.,
\cite{Curtis:SimplicialHomotopyTheory-1971,May:SimplicialObjects-1992},
 and \cite{GoerssJardine} is a more modern and comprehensive treatment.

Similar to a simplicial complex,
a simplicial set is a space built of vertices, edges, triangles,
and higher-dimensional simplices, but simplices are allowed to be glued
to each other and to themselves in more general ways. For example,
one may have several 1-dimensional simplices connecting the same
pair of vertices, a 1-simplex forming a loop,
two edges of a  2-simplex identified
to create a cone, or the boundary of a 2-simplex all contracted
to a single vertex, forming an $S^2$.
\immfig{simplset}

Another new feature of a simplicial set, in comparison with a simplicial
complex, is the presence of \emph{degenerate simplices}. For example,
the edges of the triangle with a contracted boundary (in the last
example above) do not disappear, but each of them becomes
a degenerate 1-simplex.

A simplicial set $X$ is represented as a sequence
$(X_0,X_1,X_2,\ldots)$ of mutually disjoint sets, where the
elements of $X_\thedim$ are called the \emph{$\thedim
$-simplices of $X$} (we note that, unlike for simplicial
complexes, a simplex in a simplicial set need not be
determined by the set of its vertices; indeed, there can be
many simplices with the same vertex set). For every $\thedim
\ge 1$, there are $\thedim +1$ mappings
$\partial_0,\ldots,\partial_\thedim\:X_\thedim\to
X_{\thedim-1}$ called \indef{face operators}; the intuitive
meaning is that for a simplex $\sigma\in X_\thedim$,
$\partial_i\sigma$ is the face of $\sigma$ opposite to the
$i$th vertex. Moreover, there are $\thedim +1$ mappings
$s_0,\ldots,s_\thedim\:X_\thedim\to X_{\thedim+1}$ (opposite
direction) called the \emph{degeneracy operators}; the
approximate meaning of $s_i\sigma$ is the degenerate simplex
which is geometrically identical to $\sigma$, but with the
$i$th vertex duplicated. A simplex is called
\indef{degenerate} if it lies in the image of some $s_i$;
otherwise, it is \indef{nondegenerate}. We write $X^\ndg$ for
the set of all nondegenerate simplices of~$X$.

There are natural axioms that the $\partial_i$
and the $s_i$ have to satisfy, but we will not list them here,
since we won't really use them. Moreover, the usual definition
of simplicial sets uses the language of category theory
and is very elegant and concise; see, e.g.,
\cite[Section I.1]{GoerssJardine}.

Every simplicial set $X$ specifies a topological space $|X|$,
the \emph{geometric realization} of $X$. It is obtained by assigning
a geometric $\thedim $-dimensional simplex to each nondegenerate
$\thedim $-simplex of $X$,
and then gluing these simplices together according to the
face operators; we refer to the literature for the precise definition.

There is a canonical way of converting a simplicial complex
to a simplicial set; basically, one just needs to add
appropriate degenerate simplices.

We have already given a relatively sophisticated example of a
simplicial set, namely, $K(\Z,1)$, or more precisely, the
standard Eilenberg--MacLane representation of $K(\Z,1)$ as a
Kan simplicial set\footnote{We won't define a Kan simplicial
set, but we just mention a key property, which is the reason
why these simplicial sets are essential to the considered
algorithms. Namely, if $X$ is a simplicial set and $Y$ is a
Kan simplicial set, then every \emph{continuous} map
$|X|\to|Y|$ is homotopic to a \emph{simplicial} map $X\to Y$.
Thus, continuous maps into $Y$ have a combinatorial
representation, describing them up to homotopy.}
 as defined in the introduction
(except that we haven't yet specified the degeneracy
operators, which are very simple: $s_i$
inserts $0$ after the $i $th component of a sequence).

\heading{Representing infinite simplicial sets. }
In many areas where computer scientists seek efficient
algorithms, both the input objects and intermediate
results in the algorithms are finite, and they can be
explicitly represented in the computer memory; this is
the case, e.g., for algorithms dealing with graphs or with
matrices.

In contrast, in the algorithms for homotopy-theoretic
questions considered here and in related
works, we need to deal with infinite objects.
For example, even if the input is a finite simplicial complex,
its Postnikov system (mentioned in the introduction)
is made of Eilenberg--MacLane spaces, such as $K(\Z,1)$,
represented as Kan simplicial sets, and these are
necessarily infinite.
More concretely, as we have seen, $K(\Z,1)$ has infinitely
many simplices in each dimension $\thedim \ge 1$, and thus we cannot
explicitly store even the part up to some fixed dimension.

For algorithmic purposes, we thus represent a simplicial set $X$
by  a collection of several algorithms,
which allow us to access certain information about $X$, without having
all of it explicitly stored in memory. (In computer science,
this is also called a \emph{black box} or \emph{oracle} representation
of $X$, and in the terminology of object-oriented programming,
we can think of $X$ as an instance of a \emph{class} ``simplicial set''.)
A similar representation is used for other kinds of infinite
topological or algebraic objects as well.

\heading{Locally effective simplicial sets.
}
For some computations, it may be sufficient to represent $X$
by a black box providing only ``local'' information about $X$,
and in that case, in accordance with the terminology in earlier papers, e.g.,
\cite{RubioSergeraert:ConstructiveAlgebraicTopology-2002,SergerGenova,RoSe-vecfields}, we speak of a \emph{locally effective representation}.

Concretely, let $X$ be a simplicial set, and suppose
that some computer representation (``encoding'') for the simplices
of $X$ has been fixed. For example,  in the case of $K(\Z,1)$,
we can fix the representation of the simplices of $K(\Z,1)$ by
integer sequences, and represent the integers in the sequences
by the standard binary encoding. We say that $X$ is
a \emph{locally effective simplicial set} if algorithms are
available that, given (an encoding of) a $\thedim $-simplex $\sigma$
of $X$ and $i \in\{0,1,\ldots,\thedim\}$, computes the simplex
$\partial_i\sigma$, and similarly for the degeneracy
operators~$s_i$. Briefly speaking, the face and degeneracy
operators should be computable maps.


\heading{Computing global information. } Suppose that we want to
compute some ``global'' information about a given simplicial
set $X$, for example, the $\thedim $th homology group $H_\thedim(X)$.
Then a locally effective representation of $X$ is typically
insufficient, and we need to augment it in some way.

Of course, in the particular example with the homology groups, we could
insist that $X$ be augmented with a black box that, given $\thedim $,
returns some representation of $H_\thedim(X)$. The problem is that
$X$ may not be given to us directly; rather, we may need to construct
it from other simplicial sets by a sequence of various operations.
For example, in the introduction we mentioned that the Eilenberg--MacLane
spaces $K(G,\thedim)$ can be constructed starting with $K(\Z,1)$ and
applying operations of several kinds, such as product or
classifying space.\footnote{As another, perhaps more sophisticated
example, we can mention the computation of the homotopy group
$\pi_\thedim(X)$ for a $1$-connected simplicial set $X$:
for this, given $X$, one first produces another
simplicial set $X'$ from $X$, by a sequence of operations
that ``kill'' the first $\thedim-1$ homotopy groups, and then
$\pi_\thedim(X)$ is computed as $H_\thedim(X')$ using the \emph{Hurewicz
isomorphism}.} Then, for example, a black box for computing
the homology groups of $X$ is not in itself sufficient to
compute the homology groups of the classifying space
of~$X$.

The third author and his co-authors have developed a more sophisticated
way of augmenting a locally effective simplicial set $X$ with
homological information, which is captured in the notion
of a \emph{simplicial set with effective homology}. These
simplicial sets do possess a black box for computing homology groups,
but they are also equipped with additional information, which
makes them stable under a large repertoire of operations:
if we apply some of the ``classical'' operations, such as
product, classifying space, loop space, etc. to simplicial
sets with effective homology, the result is again a simplicial
set with effective homology (and in particular, it has a black
box for computing homology groups).\footnote{One can also consider other kinds of objects with effective
homology, such as chain complexes, but for concreteness, we will
stick to simplicial sets.}

It may be useful to keep in mind that, since a simplicial set
is represented by a black box, operations on such simplicial
sets are performed by \emph{composition of algorithms}; i.e.,
the black box for the new simplicial set operates by calling
the black boxes of the old sets and processing the values
returned by them.\footnote{This feature makes it very natural
to implement algorithms from this area using
\emph{functional programming languages}, as was done
for the package \emph{Kenzo}; see, e.g., \cite{fKenzo}.}

For defining a simplicial set with effective homology,
and their polynomial-time counterpart, we need to
recall some notions concerning chain complexes.

\heading{Chain complexes. }  For our purposes, a \emph{chain
complex} $C_*$ is a sequence
$(C_\thedim)_{\thedim=-\infty}^\infty$ of \emph{free}
$\Z$-modules (i.e., free Abelian groups), together with a
sequence $(\diff_\thedim\:C_\thedim\to
C_{\thedim-1})_{\thedim=-\infty}^\infty$ of group
homomorphisms.\footnote{These chain complexes are over $\Z$;
more generally, one considers chain complexes over a
commutative ring $R$, where the $C_\thedim$ are $R$-modules.
These are needed, among others, for homology with
coefficients in $R$. But for our purposes, homology with
integer coefficients suffices; if needed, homology groups
with other coefficients can be computed using universal
coefficient theorems. Alternatively, all of the theory can be
built with coefficients from a fixed ring $R$, provided that
$R$ is equipped with sufficiently strong algorithmic
primitives.} The $C_\thedim$ are the \emph{chain groups},
their elements are called \emph{$\thedim$-chains}, and the
$\diff_\thedim$ the \emph{differentials}.
The differentials have to satisfy $\diff_{\thedim-1}\compos \diff_\thedim=0$
for every~$\thedim $ (here $\diff_{\thedim-1}\compos \diff_\thedim$
denotes the composition of maps).
We also recall that the $\thedim $th homology group $H_\thedim(C_*)$
of the chain complex $C_*$ is defined as the
factor-group $\ker \diff_\thedim/\im\diff_{\thedim+1}$.

For every simplicial set $X$, there is a canonically associated
chain complex, which is used to define the homology groups
$H_\thedim(X)$. Actually, there are two natural possibilities,
depending on whether degenerate simplices are taken into account.
We use the \emph{normalized} chain complex, which
is based solely on the nondegenerate simplices.
We reserve the simple notation $C_*(X)$ for it.

Thus, $C_\thedim(X)$ denotes the free Abelian group over
$X_\thedim^\ndg$, the set of all $\thedim$-dimensional
nondegenerate simplices (in particular, $C_\thedim(X)=0$
for $\thedim <0$). This means that a $\thedim $-chain is a
formal sum
$$
c=\sum_{\sigma\in X_\thedim^\ndg} \alpha_\sigma \cdot\sigma,
$$
where the $\alpha_\sigma$ are integers, only finitely many of them
nonzero.
 The differentials are defined in a standard way
using the face operators: for $\thedim $-chains of the form $1\cdot\sigma$,
which constitute a basis of $C_\thedim(X)$, we set
$\diff_\thedim (1\cdot\sigma):=\sum_{i=0}^\thedim(-1)^i\cdot\partial_i\sigma$
(some of the $\partial_i\sigma$ may be degenerate simplices; then
they are ignored in the sum),
and this extends to a homomorphism in a unique way (``linearly'').

We note that if $X$ is a locally effective simplicial set,
then the $\thedim $-chains of $C_*(X)$ are finite objects;
a $\thedim $-chain $c$ can be represented by a list of the $\thedim $-simplices $\sigma$
on which $c$ is nonzero, and of the corresponding coefficients~$\alpha_\sigma$.
Then the differentials are computable maps.

However, if $X_\thedim^\ndg$ is infinite, then $C_\thedim(X)$ has
infinite rank, and we cannot use it directly for computing
homology groups. The solution adopted in effective homology
is to have, together with a locally effective simplicial
set $X$, a \emph{reduction} from $C_*(X)$ to
an ``effective'' chain complex $\EC_*$, for which each chain
group $\EC_\thedim$ has a finite rank.

\heading{Reductions. }  Let $C_*,\tilde C_*$ be two chain
complexes. To define a reduction from $C_*$ to $\tilde C_*$,
we first recall two other standard notions from homological
algebra: A \emph{chain map} $f\:C_*\to\tilde C_*$
is a sequence $(f_\thedim)_{\thedim=-\infty}^\infty$ of homomorphisms
$f_\thedim\:C_\thedim\to\tilde C_\thedim$ compatible with the differentials,
i.e., $f_{\thedim-1}\compos \diff_\thedim=\tilde\diff_{\thedim}\compos f_\thedim$.
If $f,g\:C_*\to\tilde C_*$ are two chain maps, then a
\emph{chain homotopy} of $f$ and $g$ is a sequence $(h_\thedim)_{\thedim=-\infty}^\infty$
of homomorphisms $h_\thedim\: C_\thedim\to\tilde C_{\thedim+1}$
such that $f-g=\tilde\diff_{\thedim+1}\compos h_\thedim +
h_{\thedim-1}\compos\diff_\thedim$.

Now a \emph{reduction} $\rho$ from $C_*$ to $\tilde C_*$ consists
of three maps $f,g,h$, such that
\begin{itemize}
\item $f\:C_*\to \tilde C_*$ and $g\:\tilde C_*\to C_*$
are chain maps;
\item the composition $f\compos g\:\tilde C_*\to\tilde C_*$
is equal to the identity $\id_{\tilde C_*}$, while the composition
$g\compos f\:C_*\to C_*$ is chain-homotopic to $\id_{C_*}$,
with $h\:C_*\to C_*$ providing the chain homotopy; and
\item $f\compos h=0$, $h\compos g=0$, and $h\compos h=0$.
\end{itemize}

The notion of reduction goes back to Eilenberg and Mac~Lane
\cite[Section~12]{EilenbergMacLane:GroupsHPin1-1953},
who called it  a \emph{contraction}.\footnote{They did not require
the condition $h\compos h=0$, but
 simple transformation converts a reduction without this
condition into another one satisfying it.}
It is routine to check that if there is a reduction from
$C_*$ to $\tilde C_*$, then $C_*$ and $\tilde C_*$ have isomorphic
homology groups in each dimension. Reductions can also be composed,
as follows:
if $(f,g,h)$ is a reduction from $C_*$ to $\tilde C_*$ and
$(f',g',h')$ is a reduction from $\tilde C_*$ to $\dtC_*$,
then $(f'\compos f,g\compos g', h+g\compos h'\compos f)$ is a reduction
from $C_*$ to~$\dtC_*$.

\heading{Effective homology. } We are getting close to
stating the definition of a simplicial set with effective
homology. The last step is to define what we mean by an
effective chain complex $\EC_*$. We assume that, first,
$\EC_*$ is \emph{locally effective}, meaning that each
chain group $\EC_\thedim$ has some distinguished basis $\Bas_\thedim$,
$\thedim$-chains are represented as linear combinations of
elements of $\Bas_\thedim$ (and thus they can be added and subtracted
algorithmically), and there is an algorithm
for evaluating the differentials $\diff_\thedim$.
Second, $\EC_*$ is \emph{effective}, which means, in addition to the
above, that there is an algorithm that, given $\thedim$,
outputs the list of elements of the distinguished basis $\Bas_k$;
in particular, this implies that each $\EC_\thedim$ has a finite
rank $r_\thedim$. We note that by combining the construction
of $\Bas_k$ with the ability to evaluate the differential $\diff_\thedim$,
we can compute the matrix of $\diff_\thedim$ with respect to
the distinguished bases $\Bas_\thedim$ and $\Bas_{\thedim-1}$.

We can now define a \emph{simplicial set with effective
homology} as a locally effective simplicial set $X$ together
with an effective chain complex $\EC_*$ and a reduction
$\rho$ from $C_*(X)$ to $\EC_*$, where the three maps
$f,g,h$ from the definition of reduction are
computable.\footnote{In \cite{SergerGenova} and in other
papers, effective homology is defined in a more general way,
using \emph{strong equivalence} of chain complexes instead of just a
reduction. A strong equivalence of $C_*$ and $\tilde C_*$ means
that there is an auxiliary chain complex $A_*$ and reductions
of $A_*$ to both $C_*$ and $\tilde C_*$. However, here the
simpler notion using a single reduction suffices, and this
only makes the result formally stronger, since a reduction is
a special case of a strong equivalence.}

In this paper we won't have the opportunity to demonstrate
the usefulness of effective homology in algorithms;
we refer to, e.g., \cite{SergerGenova,SergerTrieste,polypost}
for examples of applications. 
%

\heading{Polynomial-time homology. } The meaning of polynomial-time
homology for the simplicial set $K(\Z,1)$ considered in this
paper is defined in a straightforward way: we want the
face and degeneracy operators to be computable in polynomial time
(which is obvious in this particular case), and
$K(\Z,1)$ should be equipped
with effective homology as above in such a way that, for every $\thedim$,
the maps $f_\thedim,g_\thedim,h_\thedim$ are computable
in polynomial time, with the polynomial possibly depending
on $\thedim$ as usual.

We stress that since we deal with
a \emph{single} effective chain complex $\EC_*$, the
ranks $r_\thedim$ depend only on $\thedim$ and thus,
for $\thedim$ fixed, they are constants. The matrix
of the differential $\diff_\thedim$ in
$\EC_*$, too, is  a \emph{constant-size} object.

However, our setting with $K(\Z,1)$ is somewhat unusual in the analysis
of algorithms: We are dealing with a single simplicial set, fixed
once and for all, which does not depend on any input.
This is an exceptional setting; most algorithms work with objects
that do depend on the input. To draw an analogy from a different
area, the setting of the present paper can be compared to
seeking an algorithm for computing the $n$th digit of the number
$\pi$, while the more usual case would be to consider algorithms
for evaluating arithmetic expressions with arbitrary precision,
where we start with integer numbers as inputs and apply
addition, subtraction, multiplication, division, roots
and functions like $\exp$, $\ln$ or $\arcsin$.

To have an example from the area considered here, in an
algorithm for computing with a given topological space $X$,
say specified as a finite simplicial complex, we may need
polynomial-time homology for the Eilenberg--MacLane space
$K(\Z^n,1)$, where $n$ is a parameter depending on $X$. Then
we want that in the corresponding effective chain complex for
$K(\Z^n,1)$, the ranks $r_2,r_3$, etc. each depend
polynomially on~$n$. (Of course, for this to be useful, we
also need that $n$ depends at most polynomially on the size
of $X$.)


This example suggests that, in order to have a generally
useful notion of polynomial-time homology, we need to
define it formally for a whole family, typically infinite, of simplicial sets.
Here we present this issue briefly, referring to
\cite{polypost} for a more detailed discussion.

Let $\II$ be a set, typically countable, such that each element
$I\in\II$ has some agreed-upon computer representation (i.e. encoding
by a finite string of bits). A \emph{simplicial set parameterized
by $\II$} is a mapping $X$ that assigns a simplicial set $X(I)$
to each $I\in\II$. We also assume that the simplices of each $X(I)$
have some encoding by bit strings. Then we define a
\emph{locally polynomial-time simplicial set} as a
simplicial set $X$ parameterized by some $\II$ such that
the face and degeneracy operators on a $\thedim$-simplex
$\sigma$ of $X(I)$ can be evaluated in time polynomial
in $\size(I)+\size(\sigma)$, where the polynomial may depend
on $\thedim$ (and $\size(.)$ denotes the
the number of bits in the encoding).

Quite analogously, we define a chain complex $C_*=(C(I)_*:I\in\II)$
parameterized by a set $\II$. We say that such a $C_*$ is
\emph{locally polynomial-time} if each $C(I)_*$ is a locally
effective chain complex (and in particular, it has a distinguished
basis $\Bas(I)_\thedim$, and $\thedim$-chains are represented
w.r.t.~this basis), and  for each fixed $\thedim$,
the differential $(\diff_I)_\thedim$ on $C(I)_\thedim$
can be evaluated in time polynomial in $\size(I)$ plus
the size of the input $\thedim$-chain.  We note that addition
and subtraction of $\thedim$-chains are polynomial-time operations
automatically.

We say that a simplicial set $X$ parameterized
by a set $\II$ is  equipped with \emph{polynomial-time homology} if
the following hold.
\begin{itemize}
\item
$X$ is locally polynomial-time.
\item
There is a locally polynomial-time chain complex $\EC_*$,
also parameterized by $\II$, such that, for each fixed $\thedim$,
the distinguished basis $\Bas(I)_\thedim$ of $\EC(I)_\thedim$
can be computed in time polynomial in $\size(I)$, and in particular,
the rank $r(I)_\thedim$
is bounded by such a polynomial.
\item For every $I\in\II$, there is a reduction
$\rho_I$ from $C_*(X(I))$ to $\EC(I)_*$,
where the maps $(f_I)_\thedim,(g_I)_\thedim,(h_I)_\thedim$
of $\rho_I$ are all computable in time bounded by a polynomial
in $\size(I)$ plus the size of the input $\thedim$-chain;
the polynomial may depend on~$\thedim$.
\end{itemize}



\section{Polynomial-time homology
from a discrete vector field}\label{s:vecfield}

Discrete Morse theory, developed by Forman \cite{Forman-discrmorse}
(also see \cite{Forman-discrmorse-guide}),
belongs among fundamental tools in combinatorial topology.
For us, the important point is that a suitable
\emph{discrete vector field} on a simplicial
 set\footnote{In \cite{RoSe-vecfields}, vector fields are considered in
somewhat greater generality, on \emph{algebraic cell complexes}.
Here it is sufficient to stay in the perhaps more intuitive setting of
vector fields on simplicial sets.} $X$ can be used
to equip $X$ with effective homology;
this is an implication of one of Forman's results, as was observed by
Romero and Sergeraert \cite{RoSe-vecfields} (they also generalized
Forman's construction by dropping a certain finiteness condition).
Here we review
the definitions, more or less repeating in a general setting
the definitions given for $K(\Z,1)$ in the introduction.
Then we formulate a sufficient condition on
the vector field so that the construction provides
polynomial-time homology for~$X$.

\heading{Discrete vector fields. }
Let $X$ be a simplicial set. For a simplex $\tau\in X$,
it may happen that two face operators give the same simplex,
i.e., $\partial_i\tau=\partial_j\tau$, $i\ne j$ (geometrically,
this means that the two faces of the simplex $\tau$ are
``glued together''). We say that $\sigma$ is a \emph{regular face}
of $\tau$ if $\sigma=\partial_i\tau$ for \emph{exactly one} index~$i $.

 A \emph{discrete vector field} $V$
on a simplicial set $X$ is a set of ordered pairs (directed edges)
of the form $(\sigma,\tau)$, where $\sigma,\tau\in X^{\ndg}$,
$\sigma$ is a regular face of $\tau$,
and for every two distinct pairs
$(\sigma,\tau),(\sigma',\tau')\in V$,
all of $\sigma,\tau,\sigma',\tau'$ are distinct.

Given a discrete vector field $V$, the nondegenerate simplices of $X$ are
classified into three subsets $\SS,\TT$, and $\CC$ as follows:
\begin{itemize}
\item $\SS$ are the \emph{source simplices};
these are simplices $\sigma$ such that
$(\sigma,\tau)\in V$ for some $\tau$.
\item $\TT$ are the \emph{target simplices};
these are simplices $\tau$ such that
$(\sigma,\tau)\in V$ for some $\sigma$.
\item$\CC$ are the  \emph{critical simplices};
these are the remaining simplices,
not occurring in any edge of~$V$.
\end{itemize}

Often it is useful to regard $V$ as a bijective mapping $V\:\SS\to\TT$,
as we did in the introduction.
Thus, for $(\sigma,\tau)\in V$, we sometimes write $\tau=V(\sigma)$
and $\sigma=V^{-1}(\tau)$.

In a drawing of a simplicial set, the pairs $(\sigma,\tau)$
of a vector field can be indicated by arrows pointing from $\sigma$
into $\tau$, as in Fig.~\ref{f:pplane}.

\labfig{pplane}{A triangulation of the real projective plane
with a discrete
vector field (after Forman \cite{Forman-discrmorse-guide}, Fig.~4.1).
Pairs of vertices with the same label should be identified;
thus, there are only one critical edge and one critical vertex.}

\labfig{pplane1}{The $V\!\partial$-graph corresponding
to Fig.~\ref{f:pplane}.}

\heading{Admissible vector fields and the $V\!\partial$-graph.}
The vector fields useful in discrete Morse theory, as well
as in our context, have an extra property.
For defining it,  we first introduce an auxiliary directed
graph, as drawn in Fig.~\ref{f:pplane1}, which we
call the $V\!\partial$-graph.

The vertex set of the $V\!\partial$-graph is $X^{\ndg}$.
In the drawing, the empty circles correspond to source simplices,
the full circles to target simplices, and the critical simplices
are marked by double circles.

The edges of the $V\!\partial$-graph
are of two kinds: first, those belonging to $V$ (drawn bold and pointing
upwards), and second, all edges of the form $(\tau,\sigma)$, where
$\tau$ is a \emph{target} simplex, $\sigma$ is a face of $\tau$
and a \emph{source} or \emph{critical} simplex, and $(\sigma,\tau)\not\in V$
(these edges point downwards).\footnote{In a simplicial
set, it may happen that $\sigma$ is a ``multiple'' face of
$\tau$. i.e., $\sigma=\partial_i\tau$ holds for several indices~$i$.
In such case, we connect $\tau$ to $\sigma$ with multiple edges in
the $V\!\partial$-graph, one edge for each such index~$i$.}
 These edges correspond to the ``allowed
moves'' defined in the introduction.

We call the vector field $V$ \emph{admissible} if the $V\!\partial$-graph
contains no directed cycle and no infinite directed path. The field
in Fig.~\ref{f:pplane} is admissible, for example.

One of Forman's results says that an admissible
 vector field $V$ can be used to
``simplify'' the underlying simplicial set $X$: by a sequence of
suitable collapsing operations, which is defined based on $V$,
one obtains a cell complex (no longer necessarily a simplicial set),
which is homotopy equivalent to $X$ but typically much smaller---its
cells correspond only to the critical simplices.

We will not use this result directly (and thus we don't formulate
it precisely). Rather, we build on a related result (obtained implicitly by
Forman with an additional finiteness assumption, and
explicitly and in general in \cite{RoSe-vecfields}),
 asserting that an admissible vector field provides
a reduction of the normalized chain complex $C_*(X)$ to a
suitable chain complex $C_*^{\rm crit}$. In this chain
complex, each $C_\thedim^{\rm crit}$ is the free Abelian
group on the set of all $\thedim$-dimensional critical
simplices. The differentials in $C_*^{\rm crit}$ are defined
based on~$V$, and they are locally effective assuming that
$X$ and $V$ are locally effective in a natural sense.

\heading{Polynomially bounded vector fields. }
We need a polynomial-time version of this result.
Let $V$ be an  admissible vector field $V$ on
a locally polynomial-time simplicial set; we assume that
both $X$ and $V$ are parameterized by a set $\II$,
as in the definition of a locally polynomial-time simplicial
set.\footnote{Of course, for the main result of this paper, polynomial-time
homology for $K(\Z,1)$, parameterization is not needed, but
we need it if we want to have a general tool for obtaining
polynomial-time homology from a vector field.}
For $\sigma\in X^\ndg$, let $\reach_V(\sigma)$
(or just $\reach(\sigma)$ if $V$ is understood) denote
the set of all simplices reachable from $\sigma$ by a directed
path in the $V\!\partial$-graph.

Let us say that $V$ is \emph{polynomially bounded} if
the following hold:
\begin{enumerate}
\item[(PBV1)] An algorithm is available that, given $I\in\II$ and a simplex
$\sigma\in X(I)^\ndg_\thedim$, classifies $\sigma$ as source, target,
or critical. In the source case, it also returns the
simplex $V(\sigma)$. The running time
is polynomial in $\size(I)+\size(\sigma)$ for every fixed~$\thedim$.
\item[(PBV2)] For every fixed $\thedim $ and every
$\sigma\in X(I)^\ndg_\thedim$, the sum of encoding sizes of all simplices
in $\reach_V(\sigma)$ is bounded by a polynomial in
$\size(I)+\size(\sigma)$.
\end{enumerate}

\begin{theorem}\label{t:PBVtoPTH} If $X$ is a locally polynomial-time
simplicial set and $V$ is a polynomially bounded vector field on $X$
such that, for every $\thedim $, the sum of the encoding sizes
of all  $\thedim $-dimensional critical simplices is polynomially
bounded (in $\size(I)$), then $X$ can be turned into a simplicial set with
polynomial-time homology.
\end{theorem}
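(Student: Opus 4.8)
The plan is to take as effective chain complex the \emph{Morse chain complex} $C_*^{\rm crit}$ associated with $V$ --- the one whose degree-$\thedim$ part is the free Abelian group on the $\thedim$-dimensional critical simplices, with the differential induced by $V$ --- and to use the reduction $\rho=(f,g,h)$ from $C_*(X)$ to $C_*^{\rm crit}$ that the result of Forman and of Romero and Sergeraert recalled above already supplies. That result gives for free that $(f,g,h)$ is a genuine reduction, so the entire content of the theorem is the \emph{complexity} bookkeeping: checking the three items in the definition of polynomial-time homology. The first, that $X$ is locally polynomial-time, is a hypothesis; what remains is (a) that $\EC_*:=C_*^{\rm crit}$ is a locally polynomial-time chain complex whose bases $\Bas(I)_\thedim$ are computable in time polynomial in $\size(I)$, and (b) that each of $f_\thedim,g_\thedim,h_\thedim$ is computable in time polynomial in $\size(I)$ plus the size of the input chain.

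For (a): the distinguished basis $\Bas(I)_\thedim$ is the set of $\thedim$-dimensional critical simplices; by hypothesis its total encoding size is polynomially bounded, and --- reading into that hypothesis, as one must for it to be usable algorithmically, that this set can also be \emph{listed} (which it visibly can in all the intended applications, including $K(\Z,1)$) --- $\Bas(I)_\thedim$ is computable in polynomial time and $r(I)_\thedim$ is polynomially bounded. To evaluate the induced differential on a critical $\thedim$-simplex $c$ (and hence, linearly, on any chain), observe that it is a signed sum over \emph{gradient paths}, i.e.\ directed paths in the $V\!\partial$-graph, that issue from the faces of $c$ and end at critical $(\thedim-1)$-simplices. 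The faces $\partial_i c$ are computable and of size polynomial in $\size(c)$ (local polynomial-time-ness of $X$), and by (PBV1), (PBV2) and admissibility the part of the $V\!\partial$-graph reachable from each $\partial_i c$ is a finite acyclic subgraph of polynomially bounded total size that can be built in polynomial time by breadth-first search, classifying each newly found simplex and computing $V$ on sources via (PBV1); the required sum is then obtained by a dynamic program over this DAG, processing its vertices in topological order (propagating $\pm1$ edge weights).

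For (b): the Forman--Romero--Sergeraert reduction is assembled from elementary reductions, one per pair $(\sigma,\tau)\in V$, and, precisely because $V$ is admissible, applying it to a chain only ever touches simplices lying in $\reach_V$ of the support of that chain (for $g$, whose input is a critical chain, one first passes to the faces of the input, exactly as for the differential). Thus on a basis simplex $\sigma$, and so on any chain, each of $f_\thedim(\sigma)$, $g_\thedim(\sigma)$, $h_\thedim(\sigma)$ is supported on, and computable from, a polynomially bounded subgraph of the $V\!\partial$-graph; the ``inverse along $V$'' appearing in the explicit formulas is the inverse of an operator $\id-N$ with $N$ nilpotent on this finite piece, so it is evaluated by back-substitution in a triangular system whose order is a topological order of the DAG. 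The integer coefficient attached to a simplex in any of these outputs is, up to sign, a count of gradient paths; since a DAG on $N$ vertices carries at most $2^{O(N)}$ directed paths, and $N$ (the number of reachable simplices) is here polynomially bounded, all intermediate and final coefficients have polynomial bit length, and the whole computation therefore runs in polynomial time.

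Once the Forman--Romero--Sergeraert reduction is in hand the conceptual shape of the argument is forced; the real work --- and the place where I expect care to be needed --- is the quantitative side. Two points in particular: one must verify that the three maps and the induced differential never look outside the $\reach_V$-sets, so that (PBV2) is \emph{exactly} the hypothesis one wants; and one must check that the integers produced by the path summations do not blow up beyond polynomial bit length, which is where admissibility (no directed cycle, no infinite directed path) together with (PBV2) has to be used. A lesser point to pin down is the enumeration of the critical simplices for $\Bas(I)_\thedim$ discussed above.
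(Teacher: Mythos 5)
Your proposal is correct and takes essentially the same route as the paper: there too the effective complex is $C_*^{\rm crit}$ on the critical simplices, the Forman/Romero--Sergeraert reduction is realized explicitly (via the stabilization $\Phi^\infty$ of $\Phi=1+V_\#\compos d+d\compos V_\#$, with $f=j\compos\Phi^\infty$, $g=i\compos\Phi^\infty$, and $h$ the stabilization of $-V_\#\compos(1+\Phi+\cdots+\Phi^N)$), and polynomiality is derived exactly as you do, from (PBV1)/(PBV2) bounding the reachable portion of the $V\!\partial$-graph plus a bound of $O(N)$ bits on the coefficients after $N$ iterations. The one point to tighten is the one you flag yourself: the supports are not literally contained in $\reach_V$ of the input, since the iteration also steps to \emph{faces} of reachable simplices (the paper's ``dotted arrows''), which is handled by local polynomial-time-ness of $X$ and costs only roughly the square of the (PBV2) bound.
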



\begin{proof}
The proof essentially follows by inspecting the work of
Forman \cite{Forman-discrmorse} (mainly Sections~7 and~8) and
making simple observations about the computation of the
relevant maps. For the reader's convenience, we provide a
self-contained presentation; this seems simpler and not much
longer than referring to the appropriate claims in Forman's
paper, introducing his notation, etc. Our presentation is,
similar to that of Forman, mainly in a combinatorial
language. We refer to \cite{RoSe-vecfields} for two other,
more algebraic variants of essentially the same proof.

Throughout the proof, we keep the parameterization of $X$ and $V$
by $\II$ implicit.

To provide the desired reduction from $C_*:=C_*(X)$,
we need to define the target chain complex $C_*^{\rm crit}$ and provide
the three maps $f,g,h$ as in the definition of a reduction.
We begin with introducing several auxiliary maps and checking some
of their properties.

The vector field $V$ induces a sequence
$V_{\#}=(V_{\#k})_{k=-\infty}^\infty$ of homomorphisms
 $V_{\#k}\:C_k\to C_{k+1}$, as follows: for a source $k$-simplex $\sigma$,
we have $V_{\#k}(1\cdot\sigma):=(-1)^{i+1}\cdot V(\sigma)$, where
$i$ is the unique index with $\sigma=\partial_i V(\sigma)$,
and for $\sigma$ target or critical, we have $V_{\#k}(1\cdot\sigma):=0$.

Next, we introduce a chain map $\Phi\:C_*\to C_*$ by
$$
\Phi:=1 +V_{\#}\compos \diff +\diff\compos V_{\#},
$$
where $1$ stands for the identity chain map and $\diff$ is the differential
of $C_*$.  It is easy to check that $\Phi$ is a chain map: indeed,
$d\compos\Phi=\diff+\diff\compos V_{\#}\compos \diff+\diff\compos \diff\compos V_{\#}=
\diff+\diff\compos V_{\#}\compos \diff=\Phi\compos \diff$ (using $\diff\compos\diff=0$).

For the proof, it is important to understand how $\Phi$ works.
We will thus discuss how the image $\Phi(1\cdot\sigma)$ is formed,
depending on the type of a $k$-simplex~$\sigma$.
\begin{enumerate}
\item The simplest case is $\sigma$ a \emph{target} simplex;
see Fig.~\ref{f:mapPhi} left. Then $V_{\#}(1\cdot\sigma)=0$, and thus
$\Phi(1\cdot\sigma)=1\cdot\sigma+
\sum_{i=0}^k V_{\#k-1}((-1)^i\cdot\partial_i\sigma)$.
So we consider all faces $\sigma'$ of $\sigma$, with the appropriate signs,
and apply $V_{\#}$ to them. Only the $\sigma'$ that are sources may
contribute to the image (and then $(\sigma,\sigma')$
are edges of the $V\!\partial$-graph), and $\Phi(1\cdot\sigma)$ is supported
only on target simplices.

Moreover, we observe that, crucially, the coefficient of $\sigma$
in $\Phi(1\cdot\sigma)$ is $0$; indeed, if $j$ is the unique
index with $V^{-1}(\sigma)=\partial_j \sigma$, then we have
$V_{\#}((-1)^j\cdot\partial_j\sigma)=(-1)^{j+1}(-1)^j\cdot\sigma=-1\cdot\sigma$,
which cancels out with the $1\cdot\sigma$ coming from the $1$
in the definition of $\Phi$. (Here we rely on the condition that
$V^{-1}(\sigma)$ is a \emph{regular} face of $\sigma$ from the definition
of discrete vector field, since we need the coefficient of
$V^{-1}(\sigma)$ in $d(1\cdot\sigma)$ to be invertible, i.e., equal to
$\pm 1$.)

Summarizing, $\Phi(1\cdot\sigma)$ consists of the target simplices
reachable from $\sigma$ in exactly two steps in the $V\!\partial$-graph,
with appropriate signs.
\item For $\sigma$ a \emph{critical} simplex we find, by a similar
reasoning, that $\Phi(1\cdot\sigma)$ consists of $\sigma$ with coefficient~$1$,
plus all the (target) simplices reachable from $\sigma$ in exactly
two steps in the $V\!\partial$-graph, again with appropriate signs.
\item Finally, for $\sigma$ a source, both the $\diff\compos V_{\#}$ and
$V_{\#}\compos\diff$ terms may make a nonzero contribution to $\Phi(1\cdot\sigma)$.
For $\diff\compos V_{\#}$ (going first up, then down), we get, with appropriate signs,
 all the \emph{source} simplices reachable from $\sigma$ in exactly
two steps in the $V\!\partial$-graph, with $\sigma$ itself cancelled out,
plus some additional target and critical simplices (here we do not follow
the edges of the $V\!\partial$-graph---that's why the arrows are dotted
in the picture).
For $V_{\#}\compos \diff$ (first down, then up), we get only target simplices.
\end{enumerate}

\labfig{mapPhi}{Forming the image $\Phi(1\cdot\sigma)$.}

Next, we define $\Phi^\infty=\lim_{N\to\infty}\Phi^N$ as the
\emph{stabilization} of $\Phi$; that is, given a $k$-chain $c$,
we compute $\Phi(c)$, $\Phi(\Phi(c))$, etc., until we reach a chain
$\tilde c$ with $\Phi(\tilde c)=\tilde c$, and we set
$\Phi^\infty(c):=\tilde c$.

To check that the iterations of $\Phi$ indeed stabilize after finitely many steps, it suffices to consider the case $c=1\cdot\sigma$, and then the stabilization
follows easily from the above discussion of the action of $\Phi$
(and from the admissibility of the vector field $V$).
Moreover, we can see that the chains in $\im\Phi^\infty$ are supported only on
critical and target simplices.

We also need to check that $\Phi^\infty$ is computable in polynomial
time. In order to compute $\Phi^\infty(1\cdot\sigma)$ (which is sufficient),
we just compute the iterations $\Phi^N(1\cdot\sigma)$, $N=1,2,\ldots$,
 until they stabilize. We note that each simplex in the support of
some $\Phi^N(1\cdot\sigma)$ can be reached from $\sigma$ by following
a directed path in the $V\!\partial$-graph, then possibly
going to a face of the current simplex (a step corresponding
to a dotted arrow in Fig.~\ref{f:mapPhi}), and then  again
following a directed path in the $V\!\partial$-graph.
Hence, by the polynomial boundedness of the vector field $V$,
the stabilization occurs for $N$ at most polynomially large,
and the sum of the encoding sizes of all simplices
in the supports of all chains encountered along the way
is also polynomially bounded (essentially by the square of the bound
in condition (PBV2)).

Each \emph{coefficient} in the chain
$\Phi^{N+1}(1\cdot\sigma)$ is the sum of $O(\thedim)$ coefficients
in $\Phi^{N}(1\cdot\sigma)$. So each coefficient in $\Phi^{N}(1\cdot\sigma)$
is bounded by $\exp(O(N))$, and hence its size (number of bits)
is at most $O(N)$. Therefore, $\Phi^\infty$ is indeed
polynomial-time computable.

Now we define an auxiliary chain complex $C_*^\Phi$; we set
$C_\thedim^\Phi:=\im\Phi_\thedim^\infty\subseteq C_k$. Equivalently,
as is easily seen, $C_\thedim^\Phi=\{c\in C_\thedim:\Phi(c)=c\}$.
The differential of $C_\thedim^\Phi$ is the restriction of
the differential of $C_*$ (this works since $\Phi$ is a chain map).
Let $i\:C_\thedim^\Phi\to C_*$ be the inclusion (which is a chain map).

Next, we come to the definition of $C_*^{\rm crit}$; as was announced
above, the chain group $C_\thedim^{\rm crit}$ is the free Abelian group
($\Z$-module) with the set of the $k$-dimensional critical simplices
in $X$ as a basis. It remains to define the differential.

First we let $j_\thedim\:C_\thedim^\Phi\to C^{\rm crit}_\thedim$ be the
homomorphism that restricts a chain $c\in C_\thedim^\Phi$ to the critical
simplices (i.e., for $c=\sum_{\sigma\in X_\thedim}\alpha_\sigma\cdot
\sigma$, we set $j_\thedim(c):=\sum_{\sigma\in X_\thedim\cap\CC}\alpha_\sigma\cdot
\sigma$). We observe that $\Phi^\infty_\thedim$, viewed as a homomorphism
$C^{\rm crit}_\thedim\to C_\thedim^\Phi$, is an inverse to $j_\thedim$.
Indeed, from the description of $\Phi$ given above, it is easy to see
that for $\sigma$ critical, $\Phi^\infty(1\cdot\sigma)=1\cdot\sigma+c'$
for some $c'$ supported on target simplices, and from this the claim
follows.

Hence each $C^{\rm crit}_\thedim$ is isomorphic to $C^\Phi_\thedim$,
and the differential $\diff^{\rm crit}$ of $C^{\rm crit}_*$ can
be defined so as to make $j$ and $\Phi^\infty$ mutually inverse
chain isomorphisms;
explicitly, $\diff^{\rm crit}:=j\compos \diff\compos\Phi^\infty$.
This finishes the definition of the target chain complex
for the desired reduction; it is clear that the matrices of the
differential $\diff^{\rm crit}$ are polynomially computable,
provided that the total encoding size  of the critical simplices
is polynomial in each dimension.

It remains to define the maps $f,g,h$ in the reduction.
The following diagram summarizes the relevant chain complexes
and maps defined so far, plus $f,g,h$:
\immfig{vec-reduc}
As the diagram suggests, we put $f:=j\compos\Phi^\infty$ and
$g:=i\compos\Phi^\infty$.  Then, since $j$ and $\Phi^\infty$ are mutually
inverse and $\Phi^\infty\compos i=1$, we have $f\compos g=1$, as required
by the definition of a reduction, and $g\compos f=i\compos \Phi^\infty$.

The chain homotopy $h$ of $i\compos\Phi^\infty$ with
the identity (Forman uses the letter $L$ for this map) is now
defined as the stabilization of the maps
$$
-V_{\#}\compos(1+\Phi+\Phi^2+\cdots+\Phi^N), \ \ N=1,2,\ldots
$$
To see that these iterations indeed stabilize on each chain
$1\cdot\sigma$, we recall that for sufficiently large $N$,
$\Phi^N(1\cdot\sigma)$ is supported only on critical and target simplices,
and $V_{\#}$ sends such chains to~$0$. By essentially the same
argument as that for the computability of $\Phi^\infty$, we
also get that each $h_\thedim$ is computable in polynomial time.

We need to verify that $h$ is the required chain homotopy, i.e.,
$d\compos h+h\compos d=1-i\compos\Phi^\infty$. This is a simple formal
calculation (showing where the formula for $h$ comes from),
which we leave to the reader
(also see \cite[proof of Th.~7.3]{Forman-discrmorse}).

As the last step, we want to check the conditions $f\compos h=0$, $h\compos g=0$,
and $h\compos h=0$. To this end, we note that the chains in $\im h$
are supported only on target simplices.
Moreover, if $c$ is a chain supported only
on target and critical simplices, then $\Phi(c)$ has the same property,
and hence $h(c)=0$. These two properties immediately give $h\compos h=0$.
Similarly, $\im g=\im\Phi^\infty$ is supported only on target and critical
simplices, and hence $h\compos g=0$. Finally, we have seen that
$\Phi^\infty$ maps target simplices to $0$, and so does $f=j\compos \Phi^\infty$,
which gives $f\compos h=0$ and concludes the proof of Theorem~\ref{t:PBVtoPTH}.
\end{proof}

\section{A polynomially bounded vector field for $K(\Z,1)$}\label{s:KZ1}

Here we finally get to the combinatorial core of the paper;
we will provide a polynomially bounded vector field for $K(\Z,1)$.

\heading{A simple composition of vector fields. }
For the sake of presentation, it will be easier to split the
vector field into two parts. Roughly speaking, the first part
will get rid of all negative components in the considered
sequences $[a_1|\cdots|a_\thedim]$, and the second part will do the rest.

Here is the way of ``splitting into two parts''
in a general setting. Let $X$ be a simplicial
set, let $V_1$ be a vector field on $X$, with the
set $\CC_1$ of critical simplices, and suppose that $\CC_1$
is closed under the face operators (each face of a critical simplex is
again critical, or degenerate).
Let $Y$ be the simplicial subset of $X$ induced by $\CC_1$
(i.e., its nondegenerate simplices are the critical simplices of $V_1$),
and let $V_2$ be a vector field on $Y$.

Then we can define a ``composition'' $V$ of $V_1$ and $V_2$ in the
obvious way; formally, if we regard a vector field a set of
ordered pairs, we simply set $V:=V_1\cup V_2$. Clearly, $V$
is a vector field, and it is easily seen that $V_1,V_2$
admissible imply $V$ admissible, and similarly for
polynomial boundedness.

In the case of $X=K(\Z,1)$, the role of $Y$ will be played by
the simplicial set whose simplices are the integer sequences
with all terms \emph{nonnegative}. With some abuse of the usual
notation, we will denote this simplicial set by $K(\N,1)$.

The first vector field will be denote by $\Vbs$ and
called the \emph{bubblesort field}, since directed paths in its
$V\!\partial$-graph resemble the computation of a sorting
algorithm called Bubblesort. Its critical simplices are
integer sequences with all entries positive.

The second vector field is defined on $K(\N,1)$, and it
has only two critical simplices $[\,]$ and $[1]$,
the same as the Eilenberg--MacLane field~$V_{\rm EML}$.
We call it the \emph{bit-chipping field} and denote
it by~$\Vbc$.
\medskip

Let us remark that one can consider composition of vector
fields in a more general and more flexible setting,
as is done in \cite{RoSe-vecfields},
but for our purposes, the simple notion above suffices.

\subsection{The bubblesort field}

\heading{Translating positive sequences to sorted sequences. }
To define the vector field $\Vbs$, it is convenient to consider
a different representation of the simplices of $K(\Z,1)$.
Namely, we represent a
$\thedim $-dimensional simplex $\sigma=[a_1|\cdots|a_\thedim]$ by
a $(\thedim+1)$-tuple $(b_0,b_1,\ldots,b_\thedim)$, where
$b_0\in\Z$ can be chosen arbitrarily  and
$b_i:=b_{i-1}+a_i$, $i =1,2,\ldots,\thedim$. Thus,
each $\sigma$ is represented as an equivalence class
of $(\thedim+1)$-tuples of integers, where two $(\thedim+1)$-tuples
are equivalent if their difference
is of the form $(a,a,\ldots,a)$ (all components
equal).
We denote the equivalence class of $(b_0,\ldots,b_\thedim)$
by $[b_0,\ldots,b_\thedim]$.

This correspondence between simplices of the form
$[a_1|\cdots|a_\thedim]$ and equivalence classes of $(\thedim+1)$-tuples
is obviously bijective. Nondegenerate simplices $[a_1|\cdots|a_\thedim]$,
i.e., those with no zero component, translate to
 $[b_0,\ldots,b_\thedim]$ with $b_{i-1}\ne b_{i}$, $i =1,2,\ldots,\thedim$.

A (nondegenerate) simplex from $K(\N,1)$ corresponds to
$[b_0,\ldots,b_\thedim]$ with strictly increasing components,
i.e., $b_0<b_1<\cdots<b_\thedim$. The face operators become extremely
simple in this notation: $\partial_i$ corresponds to
deleting the $i$th component.

\heading{The field. } As was already announced, the critical
simplices of $\Vbs$ are the $[b_0,\ldots,b_\thedim]$ with
$b_0<\cdots<b_\thedim$. If $\sigma=[b_0,\ldots,b_\thedim]$ is not
critical, we look at the smallest $\ell$ such that $b_\ell>b_{\ell+1}$;
let us call it the \emph{leading index} of $\sigma$.
Let us write $v=b_\ell$ and $u=b_{\ell+1}$. We consider the maximal
contiguous segment in the sequence $b_0,b_1,\ldots$ starting
at the $\ell$th position and containing only $v$'s and $u$'s;
formally, we take the largest $m\ge \ell+1$ such
that $b_i\in\{u,v\}$ for all $i=\ell,\ell+1,\ldots,m$, and
either $b_{m+1}\not\in\{u,v\}$ or $m=\thedim$.
We call $b_\ell,b_{\ell+1},\ldots,b_m$ the \emph{leading alternating
segment} of $\sigma$ (indeed, there can be no two consecutive
$u$'s or $v$'s, since this would mean that $\sigma$ is degenerate),
and we denote it by $\LAS(\sigma)$.

Then we let $\sigma$ be a source if $\LAS(\sigma)$ ends with $u$,
 and otherwise, $\sigma$ is a target.
For a source $\sigma$, still with $u,v,m$ as above, we set
\begin{equation}\label{e:vbs}
\tau=\Vbs(\sigma):=[b_0,\ldots,b_m,v,b_{m+1},\ldots,b_\thedim],
\end{equation}
i.e., $\Vbs$ inserts another $v$ just after $\LAS(\sigma)$.

With $\tau=\Vbs(\sigma)$ as in the just given definition,
we have $\sigma=\partial_{m+1}\tau$, and $m+1$ is easily seen
to be the  only index $i$ with $\sigma=\partial_i\tau$
(thus, $\sigma$ is a regular face of $\tau$). Moreover, $\sigma$
can be uniquely reconstructed from $\tau$ (delete the
last element of  $\LAS(\tau)$), and so
 $\Vbs$ is indeed a discrete vector field.


Next, we observe that once we show that $\Vbs$ is admissible,
it becomes obvious that it is also polynomially bounded.
This is because the boundary operators only delete components
and the vector field duplicates them, and so any simplex reachable
from a given $k$-dimensional $\sigma$ is made of the components
of $\sigma$. Hence at most $(k+1)^{k+1}$ distinct source simplices
are reachable from $\sigma$, which is a constant for $k$ fixed.

It remains to prove admissibility, which is tricker than it might seem.
Let us consider a source simplex $\sigma=[b_0,\ldots,b_{\ell-1},v,\ldots,u,
b_{m+1},\ldots,b_k]$, $b_0<b_1<\cdots < b_{\ell-1}<v>u$,
where the part between the $v$ and $u$ is the $\LAS$.
We set $\tau=\Vbs(\sigma)$,
and ask for which $i$'s
the simplex $\sigma'=\partial_i\tau$ can again be a source simplex
(in this case we say that $\sigma'$ arises from $\sigma$
by a \emph{double move}).

If $\LAS(\sigma')=\LAS(\tau)$, then $\sigma'$ is a target simplex,
and  so $\partial_i$ must change $\LAS(\tau)$.
It cannot delete elements from the middle of $\LAS(\tau)$, since the result
would be degenerate, and it cannot delete the final $v$, since this
was inserted by~$\Vbs$.

Thus, one possibility is $i=\ell$, in which case $\sigma'$ is
obtained from $\sigma$ by appending $v$ to the end of the $\LAS$
and deleting the initial $v$ of the $\LAS$. Let us call this
a \emph{switching double move}. This is the ``intended'' type of double moves
that do the bubble-sorting, provided that the $\LAS$ has length~2;
for example, $\sigma=[3,1,2]$ is transformed to $\sigma'=[1,3,2]$.
A switching double move may also occur for $\LAS(\sigma)$ of length 4
or more, if the deletion of the initial $v$ creates a new $\LAS$; i.e.,
if $b_{\ell-1}>u$. An example is $\sigma=[2,3,1,3,1]$,
$\sigma'=[2,1,3,1,3]$.

However, there is a second, less obvious possibility for a double move:
if the sequence $b_{m+1},b_{m+2},\ldots]$ following $\LAS(\tau)$
has the form $x,u,v,u,v,\ldots,u,y,\ldots]$, $x,y\not\in\{u,v\}$,
or the form $x,u,v,\ldots,u]$,
then  we can also have $i=m+2$. In this case, $\partial_{m+2}$
deletes the component following the $\LAS$, and produces a longer $\LAS$.
We call this an \emph{appending double move}.
For example, for $\sigma=[2,3,1,4,1,3,1]$, the switching double move
yields $\sigma'=         [2,1,3,4,1,3,1]$ and the appending one
yields $\sigma'=         [2,3,1,3,1,3,1]$.

If we follow a sequence of directed edges in the $V\partial$-graph
starting at some source
simplex $\tilde\sigma$, and if all source simplices encountered along the way
have $\LAS$ of length~2,
then the path has a bounded length, since all the double moves are
switching in this case, and each of them decreases the number of inversions
(i.e., pairs $(i,j)$ with $i<j$ and $b_i>b_j$) in the current source
simplex.

The following lemma shows that if $\LAS(\tilde\sigma)$ has length greater
than~2, then  every sequence of double moves starting at $\tilde\sigma$
finishes after a finite number of steps, and this already
implies the admissibility of~$\Vbs$. All the difficulty of the lemma
is in getting the statement right; the proof is routine.

\begin{lemma}\label{l:bs-structure}
Let $\tilde\sigma=[b_0,\ldots,b_{\ell-1},b_\ell=v,u,v,\ldots,u,\ldots]$ ,
$b_0<\cdots<b_\ell>u$,
be a source simplex with $\LAS(\tilde\sigma)$ of length
greater than $2$. Then every source $\sigma$ obtainable from $\tilde\sigma$
by a sequence of double moves has the following structure:
$[\beta_0,\beta_1,\ldots,\beta_\ell,\gamma]$,
where each $\beta_i$ is a block of length $k_i\ge 1$ starting
with $b_i$ and possibly continuing with $u,b_i,u,b_i,\ldots$
(alternations of $b_i$ and $u$, $u<b_i$),
and $\gamma$ is a possibly empty block that does not start with~$u$.
The sequence $(k_0,k_1,\ldots,k_\ell)$
has the form \[(\underbrace{1,1,\ldots,1}_j,k_j,k_{j+1},\ldots,k_\ell),\]
where $k_j\ge 2$ is even, while all of the other $k_i$ are odd, and
there is at least one $k_i\ge 3$.

In each double move of a sequence starting at $\tilde\sigma$,
either $j$ decreases, or it stays the same and $k_j$ increases.
Thus, each such sequence is finite.
\end{lemma}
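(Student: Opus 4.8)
The plan is to prove the structural part of the statement by induction on the number $N$ of double moves applied to $\tilde\sigma$; the two concluding assertions --- that $j$ decreases or $k_j$ increases, and that every sequence of double moves is finite --- then fall out of the inductive step. For $N=0$ one checks directly that $\tilde\sigma$ itself has the stated form, with $j=\ell$: the increasing prefix $b_0<\cdots<b_{\ell-1}$ gives the singleton blocks $\beta_0,\dots,\beta_{\ell-1}$; the leading alternating segment $\LAS(\tilde\sigma)$ is the block $\beta_\ell$, which, having length greater than $2$, starting with $v$, and ending with $u$ (as $\tilde\sigma$ is a source), has even length $k_\ell\ge 4$; and $\gamma$ is the remaining tail, which by maximality of $\LAS(\tilde\sigma)$ starts with neither $u$ nor $v$. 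The parity pattern $(1,\dots,1,k_\ell)$, the clause ``some $k_i\ge 3$'', and the inequality $u<v=b_\ell$ are then immediate.

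For the inductive step, suppose $\sigma=[\beta_0,\dots,\beta_\ell,\gamma]$ has the stated structure with parameter $j$. The first task is to locate $\LAS(\sigma)$: since $b_0<\cdots<b_{j-1}<b_j$ while $b_j>u$ (the block $\beta_j$ really does contain a $u$, as $k_j\ge 2$), the leading index of $\sigma$ is exactly $j$, with $v=b_j$, and the maximal alternating segment is precisely the block $\beta_j$. Here one uses that $\beta_{j+1}$ begins with $b_{j+1}\notin\{u,b_j\}$ when $j<\ell$, and that the head of $\gamma$ is neither $u$ nor $b_\ell$ when $j=\ell$ --- this last point is the implicit ``greedy normalization'' of the block decomposition that the lemma's careful wording is meant to fix (when $k_\ell$ is even, a $b_\ell$ at the start of $\gamma$ would have been absorbed into $\beta_\ell$). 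Consequently $\tau=\Vbs(\sigma)$ is $\sigma$ with one further copy of $v=b_j$ inserted just after $\beta_j$, so $\LAS(\tau)$ equals $\beta_j$ followed by that $b_j$. One then runs through all faces $\partial_i\tau$ and retains those that are sources other than $\sigma$, i.e.\ the double moves:
\begin{itemize}
\item deleting an entry in the interior of $\LAS(\tau)$ creates two equal consecutive entries, hence a degenerate simplex;
\item deleting the last (inserted) entry of $\LAS(\tau)$ recovers $\sigma=V^{-1}(\tau)$, which is not an allowed move;
\item deleting an entry $b_i$ with $i<j$, or an entry lying beyond $\LAS(\tau)$ other than the one immediately following it, leaves the leading alternating segment of $\tau$ (and the entry just after it) unchanged, so the outcome still has its $\LAS$ ending in $b_j$ and is a \emph{target};
\item deleting the first entry of $\LAS(\tau)$ --- the \emph{switching move} --- gives a source precisely when $j\ge 1$ and $u<b_{j-1}$; the new leading index is then $j-1$ and the new block-length sequence is $(1,\dots,1,2,k_j-1,k_{j+1},\dots,k_\ell)$ with the ``$2$'' at position $j-1$, so the parameter drops to $j-1$, the parities are maintained, and ``some $k_i\ge 3$'' survives (if $k_j$ was the witness then $k_j\ge 4$, so $k_j-1\ge 3$);
\item deleting the first entry of the block immediately following $\LAS(\tau)$ --- namely $\beta_{j+1}$, or the head of $\gamma$ if $j=\ell$; call this the \emph{appending move} --- gives a source precisely when that block's second entry is $u$, i.e.\ when $k_{j+1}\ge 3$ in the case $j<\ell$, or when $\gamma$ begins $x,u,\dots$ with $x\notin\{u,b_\ell\}$ in the case $j=\ell$; the new block-length sequence is then obtained from the old one by raising $k_j$ by an even amount ($2$ when $j<\ell$) at the expense of $k_{j+1}$ (or of $\gamma$), so $j$ is unchanged, $k_j$ grows, the parities are maintained, and $k_j\ge 4$ afterwards.
\end{itemize}
Thus the image of every double move again has the stated structure, and $j$ either strictly decreases or stays fixed with $k_j$ strictly larger; an appending move at a position $j<\ell$ with $k_{j+1}=1$ does not produce a source, and one with $k_{j+1}=2$ cannot occur because the $k_i$ with $i>j$ are odd --- this is where the parity pattern is genuinely used.

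For termination, observe that $0\le j\le\ell$ and $2\le k_j\le\thedim+1$ hold throughout (double moves preserve the dimension $\thedim$, hence the number of entries), so the pair $(\ell-j,\,k_j)$ strictly increases in lexicographic order at each double move while remaining bounded, and hence no infinite sequence of double moves exists. I expect the one genuinely laborious part to be the face-by-face classification of $\partial_i\tau$ in the inductive step --- checking that the switching and appending moves are the only double moves (degeneracy rules out the interior faces, the parity of the $k_i$ rules out the stray appending moves) and then bookkeeping the resulting block-length sequence and its parities, especially for the switching move, which rearranges three consecutive blocks. Closely related is the need to fix the greedy normalization of the decomposition so that $\LAS(\sigma)$ always coincides with $\beta_j$; pinning this down is precisely the ``getting the statement right'' that the lemma warns about.
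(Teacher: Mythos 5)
Your overall route is the same as the paper's: induction on the number of double moves, identification of $\LAS(\sigma)$ with the block $\beta_j$, classification of the double moves into switching and appending ones, bookkeeping of the block-length sequence $(k_0,\ldots,k_\ell)$, and termination via the monotonicity of $(\ell-j,k_j)$. The base case, the appending bookkeeping, the greedy-normalization remark about $\gamma$, and the explicit check that ``some $k_i\ge 3$'' is preserved are all fine, and in places more explicit than the paper.

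The gap is in the switching move. You assert that deleting the first entry of $\LAS(\tau)$ ``gives a source precisely when $j\ge1$ and $u<b_{j-1}$'', but the `only if' direction is exactly the nontrivial point, and it is the only place where the hypothesis ``at least one $k_i\ge3$'' is actually needed: your proof carries that clause through the induction but never uses it, and you instead locate the genuine use of the block structure in the parity argument for appending moves. Concretely, suppose $j=0$ or $b_{j-1}<u$. If $k_j\ge4$, the resulting simplex has a leading alternating segment of odd length $k_j-1$ and is a target, as you say; but if $k_j=2$, the result is $[b_0,\ldots,b_{j-1},u,b_j,\beta_{j+1},\ldots,\gamma]$ with an increasing prefix running through $b_j$, and whether it is a source depends on what comes later. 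If all later blocks were singletons, the first descent could occur only at $b_\ell$ against $\gamma$ (e.g.\ $\gamma=(g)$ with $u\ne g<b_\ell$), and the result would be a \emph{source} that does not have the claimed structure (an extra value $u$ has been inserted into the increasing spine); so without the ``some $k_i\ge3$'' clause the statement would actually fail. The paper closes this case by invoking that clause: since $k_j=2$ and $k_i=1$ for $i<j$, some block $\beta_i$ with $i>j$ has odd length $k_i\ge3$, so the first descent is met at that block and the resulting simplex is a target. Your proposal needs this argument (or an equivalent one) to justify the ``precisely''; as written, a source arising from such a switching face would escape the induction unexamined.
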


\begin{proof} The initial $\tilde\sigma$ clearly has the claimed form.
Let us assume that $\sigma$ is of this form, and let a source $\sigma'$ be obtained
from it by a double move.

We have $\LAS(\sigma)=\beta_j$, of even length $k_j\ge 2$. If the double move
is switching, then \[
\sigma'=[b_0,b_1,\ldots,b_{j-1},u,\underbrace{b_{j},\ldots,u,b_j}_{k_j-1},\beta_{j+1},\ldots,\gamma].
\]
If we had $j=0$ or $b_{j-1}<u$, then $\LAS(\sigma')$ would be either
the block $b_j,\ldots,u,b_j$ of odd length $k_j-1$ (for $k_j\ge 4$),
or, for $k_j=2$, another $\beta_i$, $i>j$, of odd length $k_i\ge 3$
 (guaranteed to
exist by the inductive assumption). In both cases $\sigma'$ would
be target, and so $b_{j-1}>u$. Then $\sigma'$ has the claimed
structure $[\beta'_0,\ldots,\beta'_\ell,\gamma]$, with $j'=j-1$, $\beta'_i=
\beta_i$ for all $i\not\in\{j-1,j\}$, $\beta'_{j-1}=b_{j-1},u$
of length $k'_{j-1}=2$, and $\beta'_j$ of odd length $k'_{j}=k_j-1$.
So $j$ has decreased.

For an appending double move, we distinguish two cases. For $j<\ell$,
there is at least one more block $\beta_{j+1}$ following $\beta_j$
in $\sigma$, with $k_{j+1}\ge 3$ (since $\beta_{j+1}$ must have an $u$
to append to $\beta_j$), and we have
\[
\sigma'=[b_0,b_1,\ldots,b_{j-1},\underbrace{b_{j},\ldots,u,b_j,u}_{k_j+2},
\underbrace{b_{j+1},u,\ldots,b_{j+1}}_{k_{j+1}-2},\beta_{j+2},\ldots,\gamma].
\]
This is the claimed structure with $j'=j$, $k'_j=k_j+2$, and $k'_{j+1}=
k_{j+1}-2$.

Finally, if $j=\ell$, then $\gamma$ has to start with $x,u,\ldots$,
and here we get $j'=j=\ell$ and $k'_\ell\ge k_\ell+2$ (depending
on the number of $u,v$ alternations in $\gamma$ following~$x$).
\end{proof}

\heading{A lower bound.}  Although the bubble-sorting process itself is only quadratic, it turns out that $|\reach_{\Vbs}(\tilde\sigma)|$ 
for a suitable source simplex $\tilde\sigma$ may indeed be exponential in $k$, and thus the bound $(k+1)^{k+1}$ claimed above is not so far off the mark.
Mainly to illustrate the behavior of the vector field $\Vbs$, 
we indicate the lower bound via a concrete example without proof. 
Namely, from 
\[
\tilde\sigma=[2,3,4,5,6,7,1,7,1,7,1,7,1,7,1,7,1,7,1]
\]
 we can reach source simplices such as 
$[2,1,2,1,3,1,3,4,5,1,5,1,5,6,7,1,7,1,7]$. Such simplices
have $6$ blocks (denoted by $\beta_0,\ldots,\beta_5$ in the proof above), 
and we can choose the block lengths at will, with the obvious restrictions
(the total length is fixed, and the block lengths are all odd except
for the first one). In an analogous construction with $6$ replaced by
an arbitrary integer $b$ we take $k=3b$ and  obtain a lower bound
exponential in~$k$.

\subsection{The bit-chipping field}

Here we return to the ``bar'' notation $[a_1|a_2|\cdots|a_\thedim]$,
and we will consider only simplices of $K(\N,1)$, which means
$a_i\ge 1$ for all~$i$.

\heading{The anatomy of a simplex. }
Let $\sigma=[a_1|a_2|\cdots|a_\thedim]$ be a nondegenerate $\thedim $-simplex of $K(\N,1)$.
We introduce the following terminology.
\begin{itemize}
\item Let $p=p(\sigma)\in\{0,1,\ldots,\thedim\}$
be the largest index such that
$a_1,\ldots,a_p$ are all powers of $2$ and
$a_1\le a_2\le\cdots\le a_p$. The sequence
$a_1|a_2|\cdots|a_p$ is called the \emph{nondecreasing dyadic part} of $\sigma$.
If $1\le p<\thedim$ and $a_{p}>a_{p+1}$, then $p$ is called
the \emph{peak} of $\sigma$; otherwise, $\sigma$ has no peak.
\item Let $q=q(\sigma)\in\{0,1,\ldots,\thedim\}$ be the largest index such that
$a_1,\ldots,a_q$ are all powers of $2$ (thus, $q\ge p$). The sequence
$a_1|a_2|\cdots|a_q$ is called the \emph{dyadic part} of $\sigma$.
If $q=\thedim$, then $\sigma$ is called \emph{fully dyadic}.
If, on the other hand, $q<\thedim$, then $q+1$ is the
\emph{breakpoint} of $\sigma$ and $a_{q+1}$ is the
\emph{breakpoint value} of $\sigma$ (which is not a power of~$2$).
The sequence $a_{q+2}|a_{q+3}|\cdots|a_\thedim$ is the \emph{right part} of~$\sigma$.
\end{itemize}

Here are two concrete examples:
\immfig{anatomy}

\heading{The vector field. } We define a vector field $\Vbc$ on $K(\N,1)$.
There are two types of source simplices.
\begin{enumerate}
\item[(a)] The first type of source simplices
are the simplices that are not fully dyadic and have no peak.
Thus, all of the dyadic part is nondecreasing (i.e., $p=q$;
we also admit $p=q=0$)
and the breakpoint value is larger than the last element of the dyadic part.
Explicitly, they are of the form
$$
\sigma=\left[2^{i_1}\,|\,2^{i_2}\,|\,\cdots\,|\,2^{i_q}\,|\,b\,|\,a_{q+2}\,|\,
\cdots\,|\,a_\thedim\right],
$$
$2^{i_1}\le 2^{i_2}\le\cdots\le 2^{i_q}<b$. In this case we set
\begin{equation}\label{e:tau(a)}
\Vbc(\sigma)=\tau:= \left[2^{i_1}\,|\,2^{i_2}\,|\,\cdots\,|\,2^{i_q}\,|
\lpow(b)\,|\,\ltrim(b)\,|\,a_{q+2}\,|\,\cdots\,|\,a_\thedim\right],
\end{equation}
where $\lpow(b)$ is the largest power of $2$ not exceeding $b$,
and $\ltrim(b):=b-\lpow(b)$. That is, $\tau$ is obtained by
splitting the breakpoint value $b$ into two components,
$\lpow(b)$ and $\ltrim(b)$; informally, we can think of this as
``chipping off'' the leading bit of~$b$.

We observe that each target simplex $\tau$ as defined above
has a peak, namely, $p(\tau)=q(\sigma)+1$,
and in particular, $\tau$ has a nonempty dyadic part
 (but it may happen that
the dyadic part of $\tau$ is longer than the nondecreasing dyadic part,
since $\ltrim(b)$ may be a power of two).
\item[(b)] The second type of source simplices
are the fully dyadic simplices
$
\sigma=\left[2^{i_1}\,|\,2^{i_2}\,|\,\cdots\,|\,2^{i_\thedim}\right]
$
with $2^{i_1}\le 2^{i_2}\le\cdots\le 2^{i_{\thedim-1}}<2^{i_\thedim}$
with $i_\thedim\ge 1$ (this last condition is important only for $\thedim =1$).
In this case we set
\begin{equation}\label{e:tau(b)}
\tau=\Vbc(\sigma):=\left[2^{i_1}\,|\,2^{i_2}\,|\,\cdots\,|\,2^{i_{\thedim-1}}\,|\,
2^{i_\thedim-1}\,|\,2^{i_\thedim-1}\right];
\end{equation}
 i.e., we split the last component
of $\sigma$ into two equal halves.
\end{enumerate}

\begin{lemma}\label{l:validV} This definition indeed yields a vector
field, and the only critical simplices are $[\,]$ and $[1]$.
\end{lemma}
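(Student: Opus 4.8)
To prove Lemma~\ref{l:validV} I would verify three things: (1)~the map $\Vbc$ is well-defined on source simplices, i.e., for every source $\sigma$ the sequence $\tau=\Vbc(\sigma)$ is again a nondegenerate simplex of $K(\N,1)$, and $\sigma$ is a regular face of~$\tau$; (2)~the sets of source and target simplices are disjoint, and the ``source $\mapsto$ target'' assignment is a bijection onto the target set, which amounts to showing that every target simplex arises uniquely from a source simplex; and (3)~the only simplices that are neither source nor target are $[\,]$ and $[1]$. Given the definition of a discrete vector field in Section~\ref{s:vecfield}, points (1) and (2) together are exactly what is required for $\Vbc$ to be a legitimate vector field, and (3)~is the ``only critical simplices'' clause.

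\textbf{Step-by-step.} First I would handle type~(a). Given a source simplex $\sigma=[2^{i_1}|\cdots|2^{i_q}|b|a_{q+2}|\cdots|a_\thedim]$ with $2^{i_1}\le\cdots\le2^{i_q}<b$ and $b$ not a power of~$2$, I check that in $\tau$ (formula~\eqref{e:tau(a)}) no component is zero (clear: $\lpow(b)\ge1$ and $\ltrim(b)=b-\lpow(b)\ge1$ since $b$ is not a power of~$2$), so $\tau$ is nondegenerate. Then I identify which face operators recover~$\sigma$: only $\partial_{q+1}$ (which adds $\lpow(b)+\ltrim(b)=b$) gives back~$\sigma$; I must rule out the neighbouring merges. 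Merging $2^{i_q}$ with $\lpow(b)$ gives $2^{i_q}+\lpow(b)$, which is $2^{i_q+1}$ only if $2^{i_q}=\lpow(b)$, but that is impossible since $2^{i_q}<b$ forces $\lpow(b)\ge2^{i_q}$, and if $\lpow(b)=2^{i_q}$ then $b<2^{i_q+1}=2\lpow(b)$, consistent — so I actually need a finer argument here: even when $\lpow(b)=2^{i_q}$, merging them yields $[\,\ldots|2^{i_{q-1}}|2^{i_q+1}|\ltrim(b)|\ldots]$, which differs from~$\sigma$ because its $(q)$th entry is $2^{i_q+1}\ne b$ (as $b$ is not a power of~$2$). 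Merging $\ltrim(b)$ with $a_{q+2}$, or $\lpow(b)$ with $\ltrim(b)$ on the wrong side, similarly fails to reproduce~$\sigma$. So $\sigma$ is a regular face of~$\tau$. I also record, as the lemma's own hint already notes, that $\tau$ has peak $p(\tau)=q+1$ (because $\lpow(b)>2^{i_q}$ or, when equal, the dyadic part still extends but $2^{i_q}=\lpow(b)$ is followed by $\ltrim(b)<\lpow(b)$ — wait, not necessarily; here I would be careful and simply observe $\tau$ has a peak at the position of $\lpow(b)$ because the next value $\ltrim(b)$ is strictly smaller than $\lpow(b)$, since $\ltrim(b)=b-\lpow(b)<\lpow(b)$).

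\textbf{Continuing.} For type~(b), $\sigma=[2^{i_1}|\cdots|2^{i_\thedim}]$ fully dyadic with $2^{i_{\thedim-1}}<2^{i_\thedim}$ and $i_\thedim\ge1$: in $\tau$ (formula~\eqref{e:tau(b)}) the new last two entries $2^{i_\thedim-1}$ are $\ge1$ by $i_\thedim\ge1$, so $\tau$ is nondegenerate, and $\tau$ is fully dyadic with a peak (the two equal trailing halves are strictly smaller than the preceding $2^{i_{\thedim-1}}$? no: $2^{i_\thedim-1}$ vs $2^{i_{\thedim-1}}$ — since $i_{\thedim-1}<i_\thedim$ we have $i_{\thedim-1}\le i_\thedim-1$, so $2^{i_{\thedim-1}}\le2^{i_\thedim-1}$, hence $\tau$ is actually \emph{nondecreasing} dyadic and thus has no peak; I'd note this distinguishes type~(b) targets from type~(a) targets). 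Only $\partial_\thedim$ recovers~$\sigma$ from~$\tau$: merging the last two halves gives $2^{i_\thedim}$; merging $2^{i_\thedim-1}$ with the preceding $2^{i_{\thedim-1}}$ gives $2^{i_{\thedim-1}}+2^{i_\thedim-1}$, which is a power of~$2$ only if the two are equal, i.e., $i_{\thedim-1}=i_\thedim-1$, in which case it equals $2^{i_\thedim}$ and the resulting simplex has the right last entry but has shed a component, so it has dimension $\thedim-1$, not $\thedim$ — actually both $\partial$-faces have dimension $\thedim-1$; I need to compare them as $(\thedim-1)$-simplices. If $i_{\thedim-1}=i_\thedim-1$, then $\partial_{\thedim-1}\tau=[2^{i_1}|\cdots|2^{i_{\thedim-2}}|2^{i_\thedim}|2^{i_\thedim-1}]$ whereas $\sigma=\partial_\thedim\tau=[2^{i_1}|\cdots|2^{i_{\thedim-1}}|2^{i_\thedim}]$, and these are different sequences, so $\sigma$ is still a regular face. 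Next, \textbf{disjointness and surjectivity:} a type~(a) target has a peak, a type~(b) target does not, so the two target classes are disjoint; within each class the pre-image is recovered by the explicit inverse rule (glue $\lpow(b),\ltrim(b)$ back, resp.\ glue the two equal halves), and one checks this inverse map lands in the correct source class and that no simplex is simultaneously a source and a target (a source is either not-fully-dyadic-with-no-peak, or fully-dyadic-with-last-entry-strictly-largest; a target always has, in case (a), a peak, or in case (b), two equal trailing components — in particular a type-(b) target has its last two entries equal, so it is not of source type (b), and it is fully dyadic so not of source type (a); a type-(a) target has a peak so it is not of source type (a), and it is not fully dyadic so not of source type (b)). Finally, \textbf{the critical simplices:} I take an arbitrary nondegenerate $\sigma\in K(\N,1)$ that is neither source nor target and argue it must be $[\,]$ or $[1]$. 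If $\sigma$ is fully dyadic: if $\thedim=0$ it is $[\,]$; if $\thedim\ge1$, either it is a type-(b) source (when the last entry is strictly largest and $\ge2$), or the last two entries are equal — then it is a type-(b) target — or the last entry equals~$1$, forcing all entries to be~$1$, hence nondegeneracy gives $\thedim=1$ and $\sigma=[1]$, wait $[1]$ is fully dyadic with last entry $1<2$ so it fails the $i_\thedim\ge1$-strictness and is critical; I'd enumerate these sub-cases carefully. If $\sigma$ is not fully dyadic: either it has no peak — then it is a type-(a) source — or it has a peak — then writing its nondecreasing dyadic part and applying the type-(a) inverse I check it is a type-(a) target (one must confirm that having a peak is exactly the image condition).

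\textbf{Main obstacle.} The genuinely fiddly part is the \emph{regular face} verification together with the \emph{exhaustiveness of the case analysis}: one must check, in case (a), that the merges $\partial_q$ and $\partial_{q+2}$ (the ones adjacent to the split) never accidentally reproduce~$\sigma$, which requires using that $b$ is not a power of~$2$ and that $\ltrim(b)<\lpow(b)$, and one must make sure the classification of every nondegenerate simplex into {source (a), source (b), target (a), target (b), $[\,]$, $[1]$} is a genuine partition — in particular that a simplex with a peak whose breakpoint-value sits right after the peak is handled consistently (here the definitions of ``peak'' vs.\ ``breakpoint'' in the anatomy paragraph do the work, but keeping them straight is the crux). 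The rest is bookkeeping: once the partition is established and the two explicit inverse rules are written down, bijectivity and admissibility-flavoured well-definedness follow mechanically.
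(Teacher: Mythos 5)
Your overall strategy is the same as the paper's (an exhaustive classification of simplices into sources, targets with an explicitly reconstructed unique preimage, and the two critical ones, plus the regular-face checks), and your treatment of the not-fully-dyadic case and of the type-(b) targets is essentially fine. However, there is a genuine gap in your classification: you never handle \emph{fully dyadic simplices that have a peak}, and two of the auxiliary claims you lean on are false. You assert that a type-(a) target ``is not fully dyadic'', but the image of the type-(a) rule can very well be fully dyadic, since $\ltrim(b)$ may be a power of $2$ and the right part may consist of powers of $2$ (or be empty): e.g.\ $\Vbc([3])=[2\,|\,1]$ and $\Vbc([2\,|\,6])=[2\,|\,4\,|\,2]$. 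Consequently, in your final enumeration of fully dyadic simplices the subcases ``type-(b) source'', ``last two entries equal $\Rightarrow$ type-(b) target'', ``last entry $=1$ forces all entries $=1$'' do not exhaust the possibilities: $[2\,|\,4\,|\,2]$ and $[2\,|\,1]$ fit none of them, and $[4\,|\,2\,|\,2]$ has its last two entries equal but is \emph{not} a type-(b) target (gluing the two halves gives $[4\,|\,4]$, which is not a source); it is in fact the type-(a) target $\Vbc([6\,|\,2])$. As written, your argument would either leave such simplices unclassified (so the claim that $[\,]$ and $[1]$ are the only critical simplices is not established) or assign them a nonexistent type-(b) preimage.

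The missing case is exactly the second case of the paper's proof: if $\tau=[2^{i_1}|\cdots|2^{i_{\thedim+1}}]$ is fully dyadic with a peak at $p$, then $\tau=\Vbc(\sigma)$ for $\sigma=[2^{i_1}|\cdots|2^{i_{p-1}}\,|\,2^{i_p}+2^{i_{p+1}}\,|\,2^{i_{p+2}}|\cdots]$, and the point one must check is that $2^{i_p}+2^{i_{p+1}}$ is \emph{not} a power of $2$ (because $i_p>i_{p+1}$), so that it really is the breakpoint value of a type-(a) source, and that $j=p$ is the unique face index recovering $\sigma$. (A similar verification --- that $2^{i_p}+c_{p+1}$ is never a power of $2$ when $c_{p+1}<2^{i_p}$ --- is the content of the step you deferred in the not-fully-dyadic peaked case; it does go through, since the sum lies strictly between $2^{i_p}$ and $2^{i_p+1}$ when $c_{p+1}$ is not a power of $2$, and is a sum of two distinct powers otherwise.) Your source/target disjointness conclusion survives, but it should be argued via peak versus no-peak (type-(a) targets always have a peak, while both kinds of sources are peakless), not via the incorrect ``type-(a) targets are not fully dyadic''. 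With the fully-dyadic-with-peak case added and these two claims repaired, your proof matches the paper's.
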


\begin{proof} Let us consider an arbitrary simplex $\tau$.
If it is not fully dyadic and is not a source simplex, then
it has a peak, and thus it has the form
$\tau=[2^{i_1}\,|\cdots|\,2^{i_p}\,|\,c_{p+1}\,|\cdots|\,c_{\thedim+1}]$
with $2^{i_1}\le\cdots\le 2^{i_p}>c_{p+1}$. This equals $\Vbc(\sigma)$
for $\sigma=[2^{i_1}\,|\cdots|\,2^{i_{p-1}}\,|\,2^{i_p}+c_{p+1}\,|\,
c_{p+2}\,|\cdots|\,c_{\thedim+1}]$. Thus, $\tau$ is a target simplex
and there is exactly one edge $(\sigma,\tau)\in \Vbc$.
Moreover, we have $\sigma=\partial_p\tau$, while
$\partial_j\tau\ne\sigma$ for $j\ne p$, so $\sigma$ is a
regular face of~$\tau$ as needed.

Next, if $\tau$ is fully dyadic and has a peak $p$, i.e.,
$\tau=[2^{i_1}\,|\cdots|\,2^{i_{\thedim+1}}]$,
$2^{i_1}\le\cdots\le 2^{i_p}>2^{i_{p+1}}$, then
$\tau$ is again a target simplex with $\tau=\Vbc(\sigma)$ for
$\sigma=[2^{i_1}\,|\cdots|\,2^{i_{p-1}}\,|\,2^{i_p}+2^{i_{p+1}}\,|\,2^{i_{p+1}}\,|\cdots|\,2^{i_{\thedim+1}}]$ (here $2^{i_p}+2^{i_{p+1}}$ is the breakpoint
value). Again, $j=p$ is the only index with $\partial_j\tau=\sigma$.

The last remaining case is a fully dyadic $\tau$ with no peak,
which must be nondecreasing. If it is not a source simplex, then
either we have one of the cases $[\,]$, $[1]$, or $\thedim \ge2$
and the last two components of $\tau$ are equal, which means
that $\tau$ is of the form (\ref{e:tau(b)}) and $\sigma$
can again be uniquely reconstructed from it. We have
$\sigma=\partial_j\tau$ for the unique index $j=d-1$.
\end{proof}

\heading{Preparations for analyzing $\Vbc$.}
It will be convenient to work mainly with the target simplices.
Thus, given a target simplex $\tau$, we let $\treach(\tau)\subset
\reach(\tau)$ be the set of all target simplices reachable from $\tau$.

First we will classify all possible target simplices $\tau'$ reachable from
a given target simplex $\tau$ by two steps in the $V\!\partial$-graph;
in other words, the $\tau'$ of the form $\Vbc(\partial_j\tau)$
for some~$j$.
This is a straightforward, if somewhat lengthy, case analysis.
The subsequent proofs of admissibility and polynomial boundedness
will use this classification. It would be nice to avoid considering so
many cases, but one needs to be careful in the analysis:
 for several other candidate vector fields we have tried, ``most'' cases 
apparently worked fine, but those fields failed in what seemed 
like minor details.

\begin{lemma}\label{l:taucases}
 Let $\tau=[a_1|a_2|\cdots|a_\thedim]$
be a $\thedim $-dimensional target simplex.

If $\tau$ is not fully dyadic, we can write it in the form
$$
\left[2^{i_1}\,|\,2^{i_2}\,|\,\cdots\,|\,2^{i_{p}}\,|\,2^{i_{p+1}}\,|\cdots
|\,2^{i_q}\,|\,b\,|\,a_{q+2}\,|\cdots\,|\,a_\thedim\right],
$$
where $b$ is not a power of $2$,
$2^{i_1}\le\cdots\le 2^{i_p}$, $p\ge 1$, $p\le q\le \thedim-1$, and either
$2^{i_p}>2^{i_{p+1}}$ (if $p<q$) or $2^{i_p}>b$ (for $p=q$).
Let $\tau'$ be a target simplex of the form $\Vbc(\partial_j\tau)$
for some $j $, where $\sigma=\partial_j\tau$ is a $(\thedim-1)$-dimensional
source simplex. Then $\tau'$ has one of the following forms:
\begin{enumerate}
\item[\rm (\caseA)] If $p=1$ and $2^{i_{2}}\le\cdots\le 2^{i_q}<b$, then
we can have
$$\tau'=\left[2^{i_2}\,|\cdots|\,2^{i_q}\,|\,\lpow(b)\,|\,\ltrim(b)\,|\,a_{q+2}\,|\cdots|\,a_\thedim\right]$$
(we drop the first component and split~$b$).
\xample{2^2|1|2|7}{1|2|2^2|3}
\item[\rm (\caseB)] If $i_j<i_{j+1}$ for some $j $, $1\le j \le p-1$, then
we can have
$$
\tau'=\left[2^{i_1}\,|\cdots|\,2^{i_{j-1}}\,|\,2^{i_{j+1}}\,|\,
2^{i_j}\,|\,2^{i_{j+2}}\,|\cdots|\,2^{i_q}\,|\,b\,|\,a_{q+2}\,|\cdots|\,a_\thedim\right]
$$
(the entries $2^{i_j}$ and $2^{i_{j+1}}$ are swapped).
\xample{1|2^2|2|7}{2^2|1|2|7}
\item[\rm (\caseC)] If $q\ge p+2$, $i_{p}-1=i_{p+1}=i_{p+2}<i_{p+3}\le
\cdots\le i_q$, and $2^{i_q}<b$, then
we can have
$$
\tau'=\left[2^{i_1}\,|\cdots|\,2^{i_p}\,|\,2^{i_p}\,|\,2^{i_{p+3}}\,|\cdots
|\,2^{i_q}|\,\lpow(b)\,|\,\ltrim(b)\,|\,a_{q+2}\,|\cdots|\,a_\thedim\right]
$$
(two components following the peak are merged and $b$ is split).
\xample{2|1|1|2|7}{2|2|2|2^2|3}
\item[\rm (\caseD)] If $q\ge p+2$ and $i_{p+2}\ge i_p>i_{p+1}$, then
we can have
$$
\tau'=\left[2^{i_1}\,|\cdots|\,2^{i_p}\,|\,2^{i_{p+2}}\,|\,
2^{i_{p+1}}\,|\,2^{i_{p+3}}\,|\cdots|\,2^{i_q}\,|\,b\,|\,a_{q+2}\,|\cdots|\,a_\thedim\right]
$$
(the entries $2^{i_{p+1}}$ and
$2^{i_{p+2}}$ are swapped).
\xample{2|1|2^2|7}{2|2^2|1|7}
\item[\rm (\caseE)] If $q=p+1$, $b'=2^{i_{p+1}}+b$
satisfies $b'\ge2^{i_p}$, and $b'$
is not a power of $2$, then
we can have
$$
\tau'=\left[2^{i_1}\,|\cdots|\,2^{i_p}\,|\,\lpow(b')\,|\,
\ltrim(b')\,|\,a_{q+2}\,|\cdots|\,a_\thedim\right].
$$
\xample{2^3|2|7}{2^3|2^3|1}
\item[\rm (\caseEa)]
If the situation is as in (\caseE) except that
$b'=2^i$ is a power of $2$, then we can have
$$
\tau'=\Vbc\left(
\left[2^{i_1}\,|\cdots|\,2^{i_p}\,|\,2^i\,|\,a_{q+2}\,|\cdots|
\,a_\thedim\right]\right)
$$
(note that here we do not write out $\tau'$ explicitly,
since there are still several cases to distinguish depending
on the right part of $\tau$, but we will not need to discuss them explicitly).
\xample{2^3|1|7|19}{2^3|2^3|2^4|3}
\item[\rm (\caseF)] If $q=p\le \thedim-2$, $b':=b+a_{q+2}\ge 2^{i_p}$,
and $b'$ is not a power of $2$, then we can have
$$
\tau'=\left[2^{i_1}\,|\cdots|\,2^{i_p}\,|\,\lpow(b')\,|\,
\ltrim(b')\,|\,a_{q+3}\,|\cdots|\,a_\thedim\right].
$$
\xample{2^3|7|4}{2^3|2^3|3}
\item[\rm (\caseG)] If the conditions are as in {\rm (\caseF)}
except that $b'=2^i$ is a power of $2$, then we can have
$$
\tau'=\Vbc\left(\left[2^{i_1}\,|\cdots|\,2^{i_p}\,|\,2^i\,|\,a_{q+3}\,|\cdots|\,a_\thedim\right]\right)
$$
(as in (\caseEa), we need not write out $\tau'$ explicitly).
\xample{2^3|7|1|7}{2^3|2^3|2^2|3}
\item[\rm (\caseH)] If $q=p=\thedim-1$ and either $p=1$ or $i_{p-1}<i_p$, then
we can have
$$
\tau'=\left[2^{i_1}\,|\cdots|\,2^{i_{p-1}}\,|\,
2^{i_p-1}\,|\,2^{i_p-1}\right].
$$
\xample{2|2^3|7}{2|2^2|2^2}
\end{enumerate}

If $\tau=[2^{i_1}|\cdots|2^{i_\thedim}]$ is fully dyadic,
then either $p<\thedim$ ($\tau$ has a peak), or
$p=\thedim$ ($\tau$ is nondecreasing) and $i_{\thedim-1}=i_\thedim$.
In the peak case, we have
the following
possibilities for $\tau'=\Vbc(\partial_j\tau)$:
\begin{enumerate}
\item[\rm (\casedA)] If $p=1$
and $i_2\le i_3\le\cdots\le i_{\thedim-1}<i_\thedim$, we can have
$$
\tau'=\left[2^{i_2}\,|\cdots|\,2^{i_{\thedim-1}}\,|\,2^{i_\thedim-1}\,|\,2^{i_\thedim-1}
\right]
$$
(deleting the first entry of $\tau$ and splitting the last).
\item[\rm (\casedB)] For $1\le j\le p-1$ and $i_j<i_{j+1}$,
$\tau'$ can be obtained by swapping $2^{i_j}$ and $2^{i_{j+1}}$.
\item[\rm (\casedC)] If  $i_p-1=i_{p+1}=i_{p+2}<i_{p+3}\le\cdots\le i_{\thedim-1}<i_\thedim$,
we can have
$$
\tau'=\left[2^{i_1}\,|\,\cdots|\,2^{i_p}\,|\,2^{i_p}\,|\,
 2^{i_{p+3}}\,|\cdots|\,2^{i_{\thedim-1}}\,|\,2^{i_\thedim-1}\,|\,2^{i_\thedim-1}\right]
$$
(merging two equal entries and splitting the last).
\item[\rm (\casedD)] For $i_{p+2}\ge i_p>i_{p+1}$,
$\tau'$ can be obtained from $\tau$ by swapping
$2^{i_{p+1}}$ and $2^{i_{p+2}}$.
\end{enumerate}
Finally, if a fully dyadic $\tau$ has no peak, we have
the possibility {\rm (\casedB) } for $\tau'$ and the following
additional one:
\begin{enumerate}
\item[\rm (\casedH)] If $\thedim =2$ or $i_{\thedim-2}<i_{\thedim-1}$, then we can have
$$
\tau'=\left[2^{i_1}\,|\,\cdots|\,2^{i_{\thedim-2}}\,|\,
2^{i_{\thedim-1}-1}\,|\,2^{i_{\thedim-1}-1}\right]
$$
(drop the last component and split the previous one).
\end{enumerate}

\end{lemma}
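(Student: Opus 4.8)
The plan is to prove Lemma~\ref{l:taucases} by a systematic pass over the face operators of $\tau$, after a couple of reductions. The starting point is the classification of the target simplices of $\Vbc$ obtained in the proof of Lemma~\ref{l:validV}: a target $\tau$ has exactly one of the shapes (i) not fully dyadic with a peak (the displayed form, $b$ not a power of $2$, dyadic part nondecreasing up to the peak $p$, and $2^{i_p}$ larger than whatever immediately follows the dyadic part), (ii) fully dyadic with a peak $p<\thedim$, or (iii) fully dyadic, nondecreasing, with $i_{\thedim-1}=i_\thedim$; and that proof also pins down, for each target $\tau$, the \emph{unique} face index $j_0$ with $\partial_{j_0}\tau=\Vbc^{-1}(\tau)$, namely $j_0=p$ in cases (i)--(ii) and $j_0=\thedim-1$ in case (iii). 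A two-step directed path $\tau\to\sigma\to\tau'$ in the $V\!\partial$-graph is then precisely the choice of some $j\ne j_0$ with $\sigma:=\partial_j\tau$ a \emph{source} simplex, followed by $\tau':=\Vbc(\sigma)$ (the down-edge requirement $(\sigma,\tau)\notin\Vbc$ is equivalent to $j\ne j_0$). So the lemma reduces to: for each target $\tau$ and each $j\ne j_0$, decide whether $\partial_j\tau$ is a source, of which of the two types (a), (b) in the definition of $\Vbc$, and if so record $\Vbc(\partial_j\tau)$. Here there is one free simplification: since $\tau$ lies in $K(\N,1)$, all its components are positive, hence so are those of every $\partial_j\tau$, so no face of $\tau$ is ever degenerate and only the source/target/critical type of faces has to be tracked; the test for being a source is just the explicit criterion ``not fully dyadic, no peak, breakpoint value strictly above the last dyadic component'' (type~(a)) or ``fully dyadic, nondecreasing, last two components distinct, last exponent $\ge1$'' (type~(b)), applied after recomputing $p(\partial_j\tau)$ and $q(\partial_j\tau)$.

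The heart of the proof is then the case walk, organized by the region of $\tau$ that $\partial_j$ touches. For a non-fully-dyadic $\tau$: $\partial_0$ is a source only when $p=1$ and the dyadic tail stays below $b$, which is case~(\caseA); a merge at a position $j\le p-1$ strictly inside the nondecreasing dyadic prefix is a source exactly when $i_j<i_{j+1}$, and since then $\lpow(2^{i_j}+2^{i_{j+1}})=2^{i_{j+1}}$, applying $\Vbc$ undoes the merge, so the net effect is the transposition of case~(\caseB); the merge just past the peak, inside the dyadic part, gives the peak-neighbour outcomes (\caseD) (when $i_{p+2}\ge i_p>i_{p+1}$, again a transposition after $\Vbc$) and (\caseC) (only for the tight local pattern $\ldots,2^{i_p},2^{i_p-1},2^{i_p-1},\ldots$ with the tail restored to nondecreasing order); merges involving the breakpoint value $b$ --- with its left neighbour when $q=p+1$, with its right neighbour when $q=p$ --- give cases (\caseE)--(\caseG), with a genuine fork according to whether the merged value is again a power of $2$ (which turns $\partial_j\tau$ fully dyadic and is handled by (\caseEa) and (\caseG)); merges strictly inside the right part never destroy the peak and so give nothing; and $\partial_\thedim$ gives a source only when the right part is empty, $q=p=\thedim-1$, and the last dyadic component strictly exceeds the previous one, which is case~(\caseH). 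The fully dyadic $\tau$ is handled identically, now with the type-(b) ``halve the last component'' rule in place of ``split $b$'', giving (\casedA)--(\casedD) when $\tau$ has a peak and (\casedB), (\casedH) when it does not, the latter from $\partial_\thedim$ followed by halving the new last component. In every case $\Vbc(\partial_j\tau)$ is read off the definitions of $\Vbc$, $\lpow$, and $\ltrim$ in a line or two.

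I expect the only real difficulty to be organizational: establishing that this list is \emph{exhaustive} and that the side conditions attached to the cases are stated exactly right. One has to verify that no merge inside the dyadic part other than those recorded can turn a target into a source --- that is, destroy the peak without creating a new one --- and one must be careful with the boundary arithmetic: the power-of-$2$ forks separating (\caseE) from (\caseEa) and (\caseF) from (\caseG), and the degenerate instances of the displayed patterns near the two ends of the sequence (where a block of (\caseC)/(\casedC) collapses, or a ``$2^{i-1}$'' must be required to remain a positive integer, etc.). The paper's own remark that several plausible alternative vector fields ``failed in what seemed like minor details'' is exactly the warning that this bookkeeping, not any conceptual step, is where one can slip; once it is carried out carefully, the lemma is precisely the table of answers that the subsequent admissibility and polynomial-boundedness proofs will consult.
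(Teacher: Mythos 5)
Your proposal is correct and follows essentially the same route as the paper's own proof: a case-by-case walk over the face indices $j$, driven by the observation that $\partial_j\tau$ can be a source only if it has no peak (so $\partial_j$ must destroy the peak of $\tau$, forcing $j\le p+1$ apart from $\partial_0$ and the last-face case), followed by reading off $\Vbc(\partial_j\tau)$ from the definition in each surviving case; your explicit exclusion of the index $j_0$ with $\partial_{j_0}\tau=\Vbc^{-1}(\tau)$ corresponds to the paper's remark that $j=p$ contributes no new $\tau'$. The remaining work you defer (exhaustiveness and the exact side conditions, e.g.\ in (\caseC), (\caseE)--(\caseG), (\caseH) and the fully dyadic analogues) is precisely the routine bookkeeping the paper's proof carries out.
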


\begin{proof}
As was already mentioned, the proof is
totally straightforward and could probably be left to the
reader. Yet, since getting used to the definitions
and notation probably needs some practice, we chose to
present the proof.

As in the lemma, we first consider $\tau$ not fully dyadic.
If $\sigma=\partial_j\tau$ is a source simplex, then it has no peak,
and thus the operation $\partial_j$ has to ``destroy'' the peak of
$\tau$ in some way. In particular, we have $j \le p+1$, for otherwise,
the peak of $\tau$ is also present in $\partial_j\tau$.
We just need to discuss the values of $j $ in this range.

For $j =0$, $\partial_0$ removes the first coordinate, and this
may destroy the peak only for $p=1$. For $p=1$,
$\sigma$ is a source
iff $2^{i_{p+1}}\le \cdots\le 2^{i_q}<b$ (this condition is void
for $q=1$), and if this holds, then
$\tau'$ is as in (\caseA).

If $1\le j\le p-1$, $\sigma=[2^{i_1}|\cdots|2^{i_{j-1}}|2^{i_j}+
2^{i_{j+1}}|2^{i_{j+2}}|\cdots|2^{i_q}|b|\cdots]$. In this case,
if $i_j=i_{j+1}$, then $2^{i_j}+2^{i_{j+1}}$ is a power of two,
 $\sigma$ necessarily has a peak, and thus it is not a source.
So $i_j<i_{j+1}$;  then $\sigma$ is a source and $2^{i_j}+2^{i_{j+1}}$
is the breakpoint value, and $\tau'$ is as in (\caseB).

Next, we consider $j =p$. Here the $p$th component of $\sigma$
is $2^{i_p}+2^{i_{p+1}}$ (for $q>p$) or $2^{i_p}+b$ (for $p=q$).
In both of these cases the $p$th component is not a power
of $2$ (since $p$ was the peak of $\tau$), hence $p$ is the
breakpoint of $\sigma$, and so $\Vbc(\sigma)=\tau$. Therefore,
$j =p$ does not contribute any~$\tau'$.

Finally, we need to discuss $j =p+1$. Here the sum of the two
entries of $\tau$ following the peak must greater or equal
to $2^{i_p}$ (and, in particular, $p\le \thedim-2$),
for otherwise, $p$ would be a peak in $\sigma$.
We consider three cases, depending on how many of these two
entries are powers of~$2$.

First, if $q\ge p+2$, then the peak is followed by $2^{i_{p+1}}$
and $2^{i_{p+2}}$ in $\tau$. If $2^{i_{p+1}} +2^{i_{p+2}}=2^{i_p}$,
then $i_{p+1}=i_{p+2}=i_p-1$. Then $\sigma$ begins with
$[2^{i_1}|\cdots|2^{i_p}|2^{i_p}|2^{i_{p+3}}|\cdots|2^{i_q}|b|\cdots$,
and since it has no peak, the dyadic part is nondecreasing.
Then $\tau'$ is as in (\caseC). If, on the other hand
$2^{i_{p+1}} +2^{i_{p+2}}>2^{i_p}$, then $2^{i_{p+1}} +2^{i_{p+2}}$
is not a power of $2$. Then  $\tau'$ is as in~(\caseD).

Second, we can have $q=p+1$ (still with $j =p+1$). Then the entry
of $\sigma$ following $2^{i_p}$ is $b'=2^{i_{p+1}}+b$,
 which has to be at least $2^{i_p}$.
If $b'$ is not a power of two, then
$\tau'$ is as in (\caseE), and otherwise, we get~(\caseEa).

Third, we can have $q=p$. If $q\le \thedim-2$, then the $p$th entry of $\sigma$ is
followed by $b':=b+a_{q+2}$, which has to be at
least $2^{i_p}$.  If $b'$ is not a power of two, then
$\tau'$ is as in (\caseF), and otherwise, we get~(\caseG).

There is still one remaining case for $j =p+1$, namely,
when $p=\thedim-1$; then $\partial_j$ just deletes the last
coordinate and $\sigma$ is fully dyadic. Then $\sigma$ is
a source precisely when $p=1$ or $i_{p-1}<i_p$,
and we have $\tau'$ as in~(\caseH).

\medskip

It remains to consider the case of $\tau=
[2^{i_1}|\cdots|2^{i_\thedim}]$ fully dyadic; thus,
$q=\thedim$. First we assume that $\tau$ has a peak $p\le \thedim-1$.
Then most of the analysis as above applies.

For $j=0$, we get that $\partial_0 \tau$ is  a source
iff $p=1$ and $i_2\le i_3\le\cdots\le i_{\thedim-1}<i_\thedim$,
and then we have $\tau'$ as in (\casedA).

For $1\le j\le p-1$, arguing as in the not fully dyadic case above,
for $i_j<i_{j+1}$ we get $\tau'$ by swapping $2^{i_j}$
and $2^{i_{j+1}}$ as in (\casedB). The case $j=p$ again brings no~$\tau'$.

For $j =p+1$, we have essentially the first of the three cases
of the analogous analysis for the  not fully dyadic case
($q=\thedim\ge p+2$). For $i_p-1=i_{p+1}=i_{p+2}<i_{p+3}\le\cdots\le i_{\thedim-1}<i_\thedim$,
we obtain (\casedC),
and for $i_{p+2}\ge i_p>i_{p+1}$ we get (\casedD) (a swap).

\begin{sloppypar}
Finally, we may have $\tau$ without a peak, which means
that $\tau=[2^{i_1}|\cdots|2^{i_{\thedim-2}}|2^{i_{\thedim-1}}|2^{i_{\thedim-1}}]$,
$i_1\le \cdots\le i_{\thedim-1}$ (see case (b) of the definition of~$\Vbc$).
Here $\partial_0$ and $\partial_{\thedim-1}$ bring no $\tau'$
(since $\Vbc(\partial_0\tau)=\Vbc(\partial_{\thedim-1}\tau)=\tau$).
For $1\le k\le \thedim-2$ and $i_j<i_{j+1}$, we get a $\tau'$
by swapping $2^{i_j}$ and $2^{i_{j+1}}$ as in (\casedB).
For $j =\thedim$, $\partial_\thedim$ drops the last component,
and if $i_{\thedim-2}<i_{\thedim-1}$, we get a $\tau'$ by splitting
the last component as in~(\casedH).
\end{sloppypar}
\end{proof}

\heading{Acyclicity. } Given Lemma~\ref{l:taucases},
admissibility of $\Vbc$ can be proved quickly.
Here we will check only acyclicity of the
$V\!\partial$-graph, since the non-existence of infinite
paths will be a side-product of the proof of polynomial
boundedness below.

\begin{lemma}\label{l:acyc}
The $V\!\partial$-graph contains no directed cycle.
\end{lemma}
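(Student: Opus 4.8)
The plan is to derive a contradiction from the existence of a directed cycle by combining two monovariants.

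\emph{Reduction to double moves.} First I would note that a critical simplex has out-degree $0$ in the $V\!\partial$-graph: it is not the source of any arrow of $\Vbc$, and only target simplices emit downward arrows. Hence no critical simplex lies on a directed cycle, and a cycle must alternate between source and target simplices. Contracting each source on the cycle, the cycle becomes a cyclic sequence of \emph{double moves} $\tau\to\tau'=\Vbc(\partial_j\tau)$ between target simplices; since $\partial_j$ lowers and $\Vbc$ raises the dimension by one, all these $\tau$ have a common dimension $\thedim$, and each double move is one of the types classified in Lemma~\ref{l:taucases}.

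\emph{Primary monovariant.} Put $S(\tau):=\sum_{i=1}^\thedim a_i$ for $\tau=[a_1|\cdots|a_\thedim]$. The operator $\Vbc$ preserves $S$, since it replaces one positive entry by two positive entries of the same total (it splits $b$ into $\lpow(b),\ltrim(b)$, or splits $2^c$ into $2^{c-1},2^{c-1}$). A face operator $\partial_j$ with $1\le j\le\thedim-1$ also preserves $S$, while $\partial_0$ and $\partial_\thedim$ strictly decrease it. Thus $S$ is non-increasing along double moves and strictly decreasing exactly for the moves that use $\partial_0$ or $\partial_\thedim$, which by Lemma~\ref{l:taucases} are the cases (\caseA), (\caseH), (\casedA) and (\casedH). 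Along a cycle $S$ returns to its initial value, so a cycle can use only the $S$-preserving cases (\caseB), (\caseC), (\caseD), (\caseE), (\caseEa), (\caseF), (\caseG), (\casedB), (\casedC) and (\casedD).

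\emph{Secondary monovariant.} On $\thedim$-term sequences, use the colexicographic order: compare entries from right to left, the sequence with the smaller entry at the first disagreement being smaller. I would check, case by case through the $S$-preserving possibilities of Lemma~\ref{l:taucases}, that $\tau'$ is always colexicographically strictly smaller than $\tau$. The mechanism is uniform. In the ``swap'' cases (\caseB), (\caseD), (\casedB), (\casedD) a larger dyadic entry is moved to the left of a smaller one, so at the right-hand one of the two swapped positions the entry strictly drops and all entries further right are unchanged. In the ``split'' cases (\caseC), (\caseE), (\caseF), (\casedC) a value $v$ of $\tau$ at some position $r$ (the breakpoint value, or, after an auxiliary merge performed strictly to the left of $r$, the value $b'$) is replaced by two entries, the one landing at position $r$ being $\ltrim$ of something and strictly smaller than the entry $\tau$ had at $r$ (for instance $\ltrim(b')<b$ in (\caseE) because $\lpow(b')\ge 2^{i_{p+1}+1}$, and $\ltrim(b')<a_{q+2}$ in (\caseF) because $\lpow(b')\ge 2^{i_p}>b$), while all entries of $\tau$ strictly to the right of $r$ survive untouched. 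Finally, in the implicit cases (\caseEa) and (\caseG), where $\tau'=\Vbc(\sigma)$ for a source $\sigma$ obtained from $\tau$ by one merge, one uses that $\Vbc(\sigma)$ differs from $\sigma$ only by splitting a single entry (a breakpoint value in $\sigma$'s right part, or the last entry when $\sigma$ is fully dyadic) into two strictly smaller entries sitting no further right than that entry did; tracing positions back to $\tau$ again exhibits a colexicographic drop. Hence along a cycle the current target simplex would strictly decrease in colexicographic order at each step while $S$ stays fixed --- impossible. Therefore the $V\!\partial$-graph has no directed cycle.

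\emph{Where the difficulty lies.} There is nothing conceptually hard here; the only real work is the bookkeeping of the last paragraph --- correctly matching every case of Lemma~\ref{l:taucases} to its effect on $S$ and on the colexicographic order, and not overlooking boundary sub-cases (empty right part, $\ltrim(b)$ happening to be a power of $2$, the implicit cases (\caseEa) and (\caseG), the fully dyadic variants). Given the paper's warning that superficially similar vector fields failed ``in minor details'', this check should be done carefully rather than waved through.
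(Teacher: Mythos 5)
Your proof is correct, and it follows the same skeleton as the paper's (reduce a hypothetical cycle to a sequence of double moves between target simplices and case-check Lemma~\ref{l:taucases}), but with a different progress measure. The paper splits into two regimes: for not fully dyadic $\tau$ it uses the triple (length of the dyadic part, breakpoint value, lexicographic order of the dyadic part), noting that the dyadic part can only grow, the breakpoint value can only drop at fixed $q$ (the one nontrivial inequality being $\ltrim(2^{i_{p+1}}+b)<b$ in case (\caseE)), and swaps move larger entries forward; for fully dyadic $\tau$ it uses (sum of entries, lexicographic order), together with the observation that fully dyadic simplices map to fully dyadic ones. Your pair of monovariants --- total entry sum, which $\Vbc$ and interior faces preserve while $\partial_0,\partial_\thedim$ strictly decrease, refined by the colexicographic order --- treats both regimes uniformly and is arguably a cleaner certificate; the same single inequality in case (\caseE) (equivalently your $\lpow(b')\ge 2^{i_p}\ge 2^{i_{p+1}+1}$) is where the real content sits, and your bound $\ltrim(b')<a_{q+2}$ via $\lpow(b')\ge 2^{i_p}>b$ handles (\caseF) correctly. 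The one place you leave only sketched is the implicit cases (\caseEa) and (\caseG) when the merged source $\sigma$ is fully dyadic with empty right part: there the colex drop at the last position needs that a type-(b) source has its last entry \emph{strictly} larger than the previous one (so $b'=2^i>2^{i_p}$, whence $2^{i-1}\ge 2^{i_p}>b$, resp.\ $2^{i-1}<a_{q+2}$); this goes through, so your argument is complete once that check is written out, and it buys a single well-ordered potential in place of the paper's two-regime bookkeeping. Note that, like the paper's proof, it establishes only acyclicity; the absence of infinite paths still comes from the polynomial-boundedness argument.
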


\begin{proof}
If $\tau'=\Vbc(\partial_j\tau)$ is obtained
from $\tau$ as in Lemma~\ref{l:taucases}, then for
$\tau$ not fully dyadic, one of the following can happen:
\begin{enumerate}
\item $q(\tau')>q(\tau)$, i.e., the length of the dyadic part
increases. This is always the case in (\caseEa), (\caseF),
(\caseG), and (\caseH),
and it may also happen in (\caseA) and (\caseC).
\item $q(\tau')=q(\tau)$ and the breakpoint value
decreases. This happens in (\caseA) and (\caseC) (unless $q$ drops)
and also in (\caseE). The latter is not entirely obvious,
since we need to check that $\ltrim(2^{i_{p+1}}+b)<b$,
but this holds since
$\ltrim(2^{i_{p+1}}+b)\le
2^{i_{p+1}}+b-2^{i_p}$,
and $2^{i_p}>2^{i_{p+1}}$.
\item $q(\tau')=q(\tau)$, the breakpoint value stays the same,
and the dyadic part becomes lexicographically larger.
This happens in (\caseB) and (\caseD), since the swaps move a larger component
forward.
\end{enumerate}

If $\tau$ is fully dyadic, then so is $\tau'$, and
either the sum of components of $\tau'$  is smaller
than that of $\tau$ (cases (\casedA) and (\casedH)),
or the sums of components are equal and $\tau'$
is lexicographically larger than $\tau$
(cases (\casedB), (\casedC), and (\casedD)).

This implies that there can be no directed cycle.
\end{proof}

\medskip

We remark that an alternative proof of Lemma~\ref{l:acyc}
can go along the following lines: If $\tau=[a_1|\cdots|a_\thedim]$
is not fully dyadic,
then it can be shown that \emph{either} $\ones(\tau')<\ones(\tau)$,
where $\ones(\tau)$ is the total number of 1's in $a_1,\ldots,a_\thedim$
written in binary, \emph{or} $\ones(\tau')=\ones(\tau)$
and the sequence $(i_1,\ldots,i_p)$ is lexicographically (strictly) larger
than $(i'_1,\ldots,i'_{p'})$, where $2^{i_1}|\cdots|2^{i_p}$
is the dyadic nondecreasing part of $\tau$,
and similarly for $2^{i'_1}|\cdots|2^{i'_{p'}}$ and~$\tau'$.

\heading{Polynomial boundedness. }
Condition (PBV1), polynomial computability of the vector field,
is clearly satisfied for $\Vbc$, and so we need to check (PVB2); i.e.,
we need a polynomial bound on the total encoding size of all simplices
reachable from a given simplex $\sigma$. Obviously, we can
focus only on target simplices: it suffices to provide,
for every target simplex $\ttau$, a polynomial bound on
$\sum_{\tau\in\treach(\ttau)}\size(\tau)$ in terms of $\size(\ttau)$.

Moreover, it is easy to see that neither the application of $\Vbc$
nor the face operators $\partial_i$ can increase the sum
of the components of the simplex. Thus, $\size(\tau)\le
\size(\ttau)$ for every $\tau\in\treach(\ttau)$,
and it is enough to bound the \emph{number} of simplices
in $\treach(\ttau)$.

Thus, let us fix a target simplex $\ttau$ and set
$n:=\size(\ttau)$. Our goal is a polynomial bound,
in terms of $n$, on $|\treach(\ttau)|$.

First we observe that fully dyadic simplices are easily
accounted for. Indeed, a fully dyadic  simplex
$[2^{i_1}|\cdots|2^{i_\thedim}]\in \treach(\ttau)$ is specified
by $i_1,\ldots,i_\thedim\in\{0,1,\ldots,n-1\}$, and so
there are at most $n^\thedim$ such simplices.

So we consider only the $\tau\in\treach(\ttau)$ that are not
fully dyadic. Let us write $\ttau=[\tilde a_1|\cdots|\tilde a_\thedim]$
and $\tau=[2^{i_1}|\cdots|2^{i_q}|b|a_{q+2}|\cdots a_\thedim]$,
where $q=q(\tau)$ is the length
of the dyadic part and $b$ is the breakpoint value.

We would like to show that with $\ttau$ fixed, there are
only polynomially many possibilities for $\tau$.
First, as was noted above, the number of choices
for the dyadic part of $\tau$ is polynomially bounded.

Second, it turns out that all of the right part
of $\tau$ is inherited from $\ttau$, i.e.,
$a_i=\tilde a_i$ for all $i\ge q+2$. This ``stability of
the right part'' is not hard to prove
inductively using Lemma~\ref{l:taucases}, and it will
be the first part of the key lemma below.

Thus, the last thing to do is showing that there are
only polynomially many possibilities for the breakpoint
value $b$ of $\tau$, and this is the most tricky part of the proof.
We will distinguish two cases: if $b=\tilde a_{q+1}$, i.e.,
$b$ is ``inherited'' from $\ttau$, then we call $\tau$ a \emph{raw}
simplex, and otherwise, $\tau$ is \emph{processed}.

The following lemma shows that if $\tau$ is processed, then
its breakpoint value belongs to a certain inductively defined
set, which is of polynomial size. In order that the proof
goes through, we need to strengthen the inductive hypothesis:
namely, we need that for a processed $\tau$, the breakpoint
value is smaller than the maximum entry of the dyadic part.
This will play a role only in a single case among those
in Lemma~\ref{l:taucases}, namely (\caseE); while all the other cases
are natural and straightforward, (\caseE) seems to work only by a small
miracle.

\begin{lemma}[Key lemma]\label{l:ttau}
 Let $\tau\in \treach(\ttau)$ be as above.
Then $a_i=\tilde a_i$ for all $i\ge q+2$, i.e., the
right part of $\tau$ coincides with the corresponding segment
of $\ttau$. Moreover, if $\tau$ is processed,
then $b<\max(2^{i_1},\ldots,2^{i_q})$,
and $b\in B_{q+1}$, where the sets $B_1,\ldots,B_\thedim$ are defined
inductively as follows:
\begin{itemize}
\item $B_1=\ltrims(\tilde a_1)$, where, for a positive integer $a$,
we define $\ltrims(a)=\emptyset$ if $a$ is a power of~$2$,
and $\ltrims(a)=\{\ltrim(a)\}\cup \ltrims(\ltrim(a))$ otherwise.
\item $
B_{j+1}=\ltrims\Bigl(\{\tilde a_{j+1},\tilde a_{j}+\tilde a_{j+1}\}
\cup  \{2^i+\tilde a_{j+1}:0\le i\le n-1\}
\cup \{b+\tilde a_{j+1}: b\in B_{j}\}\Bigr),
$
where we extend $\ltrims(.)$ to sets by $\ltrims(A):=
\bigcup_{a\in A}\ltrims(a)$.
\end{itemize}
\end{lemma}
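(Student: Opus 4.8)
The plan is to prove all three assertions together by induction on the number~$r$ of \emph{double moves} in a directed path from $\ttau$ to~$\tau$ in the $V\!\partial$-graph, the inductive hypothesis being the full statement of the lemma for all targets reachable from~$\ttau$ by fewer than $r$ double moves. Since $\ttau$ and $\tau$ are target simplices and edges of the $V\!\partial$-graph alternate between source and target simplices, such a path has the form $\ttau=\tau^{(0)},\tau^{(1)},\dots,\tau^{(r)}=\tau$ with $\tau^{(s+1)}=\Vbc(\partial_{j_s}\tau^{(s)})$, so each step is one of those described in Lemma~\ref{l:taucases} (with $\tau^{(s)}$ in the role of ``$\tau$'' there). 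The base case $r=0$ is immediate: $\tau=\ttau$, $q=q(\ttau)$ and $b=\tilde a_{q+1}$, so $\tau$ is raw and only the trivial statement about the right part is needed.

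For the inductive step I would first note, from the fully-dyadic part of Lemma~\ref{l:taucases}, that a double move from a fully dyadic target always yields a fully dyadic target; as $\tau$ is not fully dyadic, neither is any~$\tau^{(s)}$, so case~(\caseH), whose output is fully dyadic, does not occur, and the inductive hypothesis applies to $\tau'':=\tau^{(r-1)}$, which falls under one of the cases (\caseA)--(\caseG). Write $q'':=q(\tau'')$ and let $b''$ be its breakpoint value. Cases (\caseB) and (\caseD) only transpose two entries inside the dyadic part, so $q$, $b$, the right part and $\max(2^{i_1},\dots,2^{i_q})$ are unchanged and nothing needs checking. In each of the remaining six cases $\tau=\Vbc(\sigma)$ for the source $\sigma:=\partial_j\tau''$ and arises by splitting the breakpoint value~$w$ of~$\sigma$ into $\lpow(w)$ and $\ltrim(w)$: this~$w$ is $b''$ in (\caseA),(\caseC), is $2^{i_{p+1}}+b''$ in (\caseE), is $b''+\tilde a_{q''+2}$ in (\caseF) (the right-part stability from the inductive hypothesis is used to identify the merged entry), and, in (\caseEa),(\caseG) --- where the combination $2^{i_{p+1}}+b''$ resp.\ $b''+\tilde a_{q''+2}$ is a power of~$2$ and is absorbed into the dyadic part --- it is a later entry $\tilde a_m$ of the right part of~$\tau''$ (one such exists, as otherwise $\sigma$, hence $\tau=\Vbc(\sigma)$, would be fully dyadic). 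Since $\lpow(w)$ is at least the last dyadic entry of~$\sigma$, the split lengthens the dyadic part; if $\ltrim(w)$ is not a power of~$2$ it becomes the breakpoint value of~$\tau$, and otherwise the dyadic part extends further and the breakpoint value of~$\tau$ is a later entry inherited unchanged from~$\ttau$, making $\tau$ raw. In every subcase $q(\tau)\ge q''$, the right part of~$\tau$ is the suffix of that of~$\tau''$ beginning at position $q(\tau)+2$, which with the inductive hypothesis yields $a_i=\tilde a_i$ for all $i\ge q(\tau)+2$; and if $\tau$ is processed, $b=\ltrim(w)<\lpow(w)\le\max(2^{i_1},\dots,2^{i_q})$.

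It remains to show $b\in B_{q+1}$ when $\tau$ is processed. The main device is that $\ltrims\bigl(\ltrims(A)\bigr)\subseteq\ltrims(A)$ for every set~$A$, so each~$B_j$ is closed under~$\ltrim$; it then suffices to exhibit the split value~$w$ either already in $B_{q+1}$ or among the generators of~$B_{q+1}$. The four kinds of generator, $\tilde a_{j+1}$, $\tilde a_j+\tilde a_{j+1}$, $2^i+\tilde a_{j+1}$ ($0\le i\le n-1$), and $b+\tilde a_{j+1}$ ($b\in B_j$), match, respectively: cases (\caseA),(\caseC) with $\tau''$ raw, and cases (\caseEa),(\caseG), where $w=\tilde a_{q(\tau)+1}$; case (\caseF) with $\tau''$ raw, where $w=\tilde a_{q(\tau)}+\tilde a_{q(\tau)+1}$; case (\caseE) with $\tau''$ raw, where $w=2^{i_{p+1}}+\tilde a_{q(\tau)+1}$ and one uses that every dyadic entry of a reachable simplex is a power $2^i$, $0\le i\le n-1$ (the component sum never grows under double moves); and case (\caseF) with $\tau''$ processed, where $w=b''+\tilde a_{q(\tau)+1}$ with $b''\in B_{q(\tau)}$. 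In cases (\caseA),(\caseC) with $\tau''$ processed one has $w=b''\in B_{q+1}$ outright. In each instance $\ltrim(w)\in B_{q+1}$ then follows from the closure property; here one must note that $q(\tau)=q''$ in (\caseA),(\caseC),(\caseE) and $q(\tau)\ge q''+1$ in (\caseEa),(\caseF),(\caseG), so that the generator indeed belongs to~$B_{q+1}$.

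The hard part is case~(\caseE) with $\tau''$ processed --- the ``small miracle'', as $w=2^{i_{p+1}}+b''$ has no visible link to any generator. Here one invokes the strengthened inductive hypothesis $b''<\max(2^{i_1},\dots,2^{i_{q''}})=2^{i_p}$ (the maximum is $2^{i_p}$ since $\tau''$ has its peak at~$p$): with $w\ge 2^{i_p}$ this forces $\lpow(w)=2^{i_p}$, and inspecting the binary expansion of~$b''$ shows bits $i_{p+1},\dots,i_p-1$ of~$b''$ are all~$1$, whence a short computation gives $b=\ltrim(w)=\ltrim^{\,i_p-i_{p+1}}(b'')\in\ltrims(b'')$. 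Since $q(\tau)=q''$ and $b''\in B_{q''+1}=B_{q+1}$, closure of~$B_{q+1}$ under~$\ltrim$ gives $b\in B_{q+1}$. This is the only place where the ``$b<\max$ dyadic part'' clause is used, which is exactly why it had to be folded into the inductive hypothesis.
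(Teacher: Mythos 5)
Your proof is correct and takes essentially the same route as the paper's: induction along double moves through the case list of Lemma~\ref{l:taucases}, dismissing (\caseB), (\caseD), (\caseH), distinguishing raw/processed predecessors, matching each split value to a generator of $B_{q+1}$ (with closure under $\ltrims$), and handling case (\caseE) with a processed predecessor via the strengthened hypothesis $b''<2^{i_p}$ and the binary carry computation giving $\ltrim(w)\in\ltrims(b'')$. The differences are only presentational—you spell out a few points the paper leaves implicit, such as why fully dyadic simplices cannot occur along the path and why $i_{p+1}\le n-1$.
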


\begin{proof} It suffices to prove that if $\tau$ is as claimed
in the lemma, then $\tau'=\Vbc(\partial_j\tau)$ as in Lemma~\ref{l:taucases}
has this form as well (moreover, we may
assume that $\tau'$ is not fully dyadic).
We need to consider the cases (\caseA)--(\caseH)
in Lemma~\ref{l:taucases}, but we can right away settle (\caseH),
where $\tau'$ is fully dyadic, as well as (\caseB) and (\caseD), which only
permute the dyadic part.
This leaves us with cases (\caseA), (\caseC), (\caseE),
(\caseEa), (\caseF), and~(\caseG).

First let $\tau$ be raw, with $b=\tilde a_{q+1}$.
In cases (\caseA) and (\caseC)
$\tau'$ contains $\lpow(\tilde a_{q+2})$ followed by
$b':=\ltrim(\tilde a_{q+2})$, at the $(q+1)$st position.
If $b'$ is a power of $2$, then $\tau'$ is raw, and otherwise,
we have $b'\in B_{q+1}$ and $b'<\lpow(\tilde a_{q+1})$;
the latter is the required entry larger than $b'$ in the dyadic part.
Hence $\tau'$ is a processed simplex as claimed in the lemma.

In (\caseE) and (\caseF), we have a situation similar to the one just discussed,
except that $b'=\tilde a_{q+1}+2^i$ for some $i< n$ in (\caseE),
and $b'=\tilde a_{q+1}+\tilde a_{q+2}$ in (\caseF).
Moreover, in (\caseE), $b'$ is at position $q+1$,
while in (\caseF) it is at position $q+2$. Again we find that $\tau'$
is a processed simplex of the claimed form.
In cases (\caseEa) and (\caseG), we either get $\tau'$ fully dyadic, or
the breakpoint value of $\tau'$ is $\ltrim(\tilde a_{q'+1})$ for some
$q'\ge q+1$, preceded by $\lpow(\tilde a_{q'+1})$.
Then $\tau'$ is a processed simplex as in the lemma as well,
and the discussion
of a raw $\tau$ is finished.

Now let $\tau$ be processed, with $b\in B_{q+1}$,
$b<\max(2^{i_1},\ldots,2^{i_q})$.
In cases (\caseA) and (\caseC) $\tau'$ may be raw,
which is fine, or processed with breakpoint value $\ltrim(b)$,
which lies in $B_{q+1}$, since $B_{q+1}$ is closed under $\ltrim(.)$.

Case (\caseE) is, in a sense, the most sophisticated, and it is here
where the inductive hypothesis $b<\max(2^{i_1},\ldots,2^{i_q})$
is crucial. In the setting of (\caseE), $2^{i_p}$ is the maximum
of the dyadic part of $\tau$, and so $2^{i_p}>b$. Let $b'=b+2^{i_{p+1}}$,
where $2^{i_{p+1}}<2^{i_p}$; by the conditions in case (\caseE),
we have $b'>2^{i_p}$.

We claim that $\ltrim(b')\in \ltrims(b)$ (this will show that $b'\in B_{q+1}$
and thus $\tau'$ is as required). To check this, let us write, for brevity,
$u=i_p$ and $v=i_{p+1}$, and let $\beta_{u-1}\beta_{u-2}\cdots\beta_0$ be
the binary notation for $b$, i.e., $b=\sum_{i=0}^{u-1}\beta_i 2^i$,
$\beta_i\in\{0,1\}$. Since $2^u-2^v<b<2^u$,
we have $\beta_{u-1}=\cdots=\beta_v=1$.
Then $b'$ in binary is $1000\cdots0\beta_{v-1}\beta_{v-2}\cdots
\beta_0$, and so $\ltrim(b')$ can be obtained from $b$
by iterating $\ltrim(.)$. Thus, $b'\in B_{q+1}$ indeed.

The consideration in cases (\caseEa) and (\caseG) is
the same
as the one for $\ttau$ raw.

The last case to consider is (\caseF). Here the dyadic
part of $\tau'$ is longer than that of $\tau$.
By induction, we have $b\in B_{q+1}$, and so $\ltrim(b+\tilde a_{q+2})\in B_{q+2}$
by the definition of $B_{q+2}$ (or it is a power of $2$, in which
case $\tau'$ is raw). As in the previous
case, the entry $\lpow(b+\tilde a_{q+2})$
supplies the power of $2$ greater than $\ltrim(b+\tilde a_{q+2})$,
as required for the induction.
The lemma is proved.
\end{proof}

\begin{corol}\label{c:pbou}
For $\ttau$ as in Lemma~\ref{l:ttau}, we have
$|\treach(\ttau)|= O(n^{2\thedim})$, with the implicit constant
depending on~$\thedim $.
\end{corol}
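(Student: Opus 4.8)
The plan is to estimate $|\treach(\ttau)|$ by reusing the trichotomy from the discussion preceding Lemma~\ref{l:ttau}: every element of $\treach(\ttau)$ is either fully dyadic, or non-fully-dyadic and \emph{raw}, or non-fully-dyadic and \emph{processed}, and I would bound each class separately. For the fully dyadic simplices I would simply recall that such a simplex $[2^{i_1}|\cdots|2^{i_\thedim}]\in\treach(\ttau)$ is determined by the exponents $i_1,\ldots,i_\thedim$, and each $i_\ell$ lies in $\{0,1,\ldots,n-1\}$ (any entry of a simplex in $\treach(\ttau)$ has size at most $\size(\tau)\le n$, hence is smaller than $2^n$); so there are at most $n^\thedim$ of them. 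The same observation bounds the number of possible dyadic parts of a non-fully-dyadic $\tau\in\treach(\ttau)$: a dyadic part of length $q$ admits at most $n^q$ choices, so $O(n^\thedim)$ in total. Now the Key Lemma says the right part of such a $\tau$ equals the corresponding segment of $\ttau$, so $\tau$ is completely determined by its dyadic part together with its breakpoint value $b$; and for a \emph{raw} $\tau$ the value $b=\tilde a_{q+1}$ is already pinned down by the length $q$ of the dyadic part. Hence there are $O(n^\thedim)$ raw non-fully-dyadic simplices in $\treach(\ttau)$.

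This reduces everything to the processed simplices, and there the only thing left to establish is that $|B_j|=O(n^\thedim)$ for every $j\le\thedim$, with the implicit constant depending on $\thedim$. Granting this, the Key Lemma tells us that a processed $\tau$ with dyadic part of length $q$ is determined by that dyadic part together with an element $b\in B_{q+1}$, so the number of processed non-fully-dyadic simplices is at most $\sum_{q}n^q\cdot|B_{q+1}|=O(n^\thedim)\cdot O(n^\thedim)=O(n^{2\thedim})$; adding the $O(n^\thedim)$ fully dyadic and raw ones yields $|\treach(\ttau)|=O(n^{2\thedim})$, as claimed.

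To bound $|B_j|$, the first step is to control the binary sizes of the integers appearing in the recursive definition of $B_1,\ldots,B_\thedim$. Since $\tilde a_i<2^{\size(\tilde a_i)}$ and $\sum_i\size(\tilde a_i)=n$, each of $\tilde a_{j+1}$, $\tilde a_j+\tilde a_{j+1}$, and $2^i+\tilde a_{j+1}$ (with $0\le i\le n-1$) has size $O(n)$; and by induction on $j$ every element of $B_j$ has size $O(n)$ (with a constant depending on $\thedim$), because $B_{j+1}$ is obtained by applying $\ltrims(\cdot)$ --- which never increases an integer --- to sums of the form (a number of size $O(n)$) $+\ \tilde a_{j+1}$, so the size bound grows by at most $1$ per level. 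Since iterating $\ltrim$ strips off one leading binary digit at a time, $|\ltrims(a)|\le\size(a)=O(n)$ for every $a$ to which $\ltrims$ is applied, and the set fed to $\ltrims$ in the definition of $B_{j+1}$ has at most $|B_j|+n+2$ elements. Hence $|B_{j+1}|=O(n)\cdot(|B_j|+n+2)$; together with $|B_1|=|\ltrims(\tilde a_1)|\le\size(\tilde a_1)\le n$, this recursion gives $|B_j|=O(n^j)$, so $|B_j|=O(n^\thedim)$ for all $j\le\thedim$.

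The genuinely delicate point in this whole analysis already lies behind us: it is Lemma~\ref{l:ttau}, and in particular its case~(\caseE), where the strengthened inductive hypothesis $b<\max(2^{i_1},\ldots,2^{i_q})$ is exactly what is needed to keep the breakpoint values of processed simplices inside the polynomially small sets $B_{q+1}$. Once that lemma is in hand, the only thing to watch in the present corollary is the size estimate on the elements of the $B_j$'s --- without it, the repeated use of $\ltrims$ could a priori generate superpolynomially many breakpoint values, and the counting argument would collapse.
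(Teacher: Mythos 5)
Your proposal is correct and follows essentially the same route as the paper: count the $O(n^{\thedim})$ possible dyadic parts (and fully dyadic simplices), note that the Key Lemma pins down the right part and the raw breakpoint values, and bound the processed breakpoint values via the recursion $|B_1|\le n$, $|B_{j+1}|=O(n)\cdot(|B_j|+n)$, giving $|B_j|=O(n^j)$ and hence $|\treach(\ttau)|=O(n^{2\thedim})$. The paper's proof is just a terser version of the same counting argument, with the size bound on elements of the $B_j$'s and the fully dyadic count left implicit from the preceding discussion.
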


\begin{proof} For each $\tau\in \treach(\ttau)$, we have
at most $n^\thedim$ choices for the dyadic part (which includes
fixing $q$, the length of the dyadic part). A raw $\tau$
is already determined by $\ttau$ and by the dyadic part, while for
$\tau$ processed, we also need to specify~$b$.

The definition of $B_j$ gives $|B_1|\le n$ and $|B_{j+1}|\le 3n+n^2+n|B_j|$,
which yields $|B_j|=O(n^j)$, and the corollary follows.
\end{proof}

\heading{Remark. } A more careful (and more complicated)
analysis should probably give $O(n^\thedim)$ instead of
$O(n^{2\thedim})$ in Corollary~\ref{c:pbou}. However, as we
will now indicate, our vector field is not much better;
there can indeed be about $n^\thedim$ reachable simplices
in~$\treach(\ttau)$.

To see this, let us take $n$ that is
an integer multiple of $\thedim^2$, i.e., $n=\thedim^2\ell$, and let
us consider a source simplex
$\tsigma=[\tilde a_1|\cdots|\tilde a_\thedim]$, where
$\tilde a_i:= (2^\ell-1)2^{(i-1)\ell}$, $i=1,2,\ldots,k$.
Put differently, if we think of the binary encoding of
each $\tilde a_i$ as consisting of $\thedim$ blocks of $\ell$ bits each
(thus, $\tilde a_i$ has at most $n/\thedim$ bits and $\size(\tsigma)\le n$),
then $\tilde a_i$ has $1$'s in the $i$th block and $0$'s elsewhere.
It can be shown that each simplex $\sigma=[a_1|\cdots|a_\thedim]$,
where $a_i$ has exactly one 1 in the $i$th block and $0$'s everywhere
else,  belongs to $\reach(\tsigma)$. Since for each $i$, the position of the
single $1$ in $a_i$ can be chosen in $\ell$ ways,
we have $|\reach(\tsigma)|\ge\ell^\thedim=(n/\thedim^2)^\thedim$.

It would be interesting to see
if one could reach a significantly better bound with a different
vector field, or if there is perhaps a good lower bound valid for every
vector field.


%

 \subsection*{Acknowledgments}

 We would like to thank
Martin \v{C}adek, Luk\'a\v{s}
Vok\v{r}\'{\i}nek, and Uli Wagner for useful discussions and
ongoing collaboration. Moreover, we thank Uli Wagner  and Martin \v{C}adek
for insightful comments on a preliminary version of the manuscript.

\bibliographystyle{alpha}
\bibliography{Postnikov,simplicHomotopy}

\end{document}